\newcommand*{\csf}{\ensuremath{{s}}}
\newcommand*{\ssf}{\ensuremath{{s}}}
\newcommand*{\Sinz}{\ensuremath{\Phi}}
\newcommand*{\Eq}{\ensuremath{\Psi}}
\newcommand*{\vars}{\ensuremath{\mathrm{vars}}}
\newcommand*{\CBound}{\ensuremath{\mathsf{CBound}}}
\newcommand*{\edp}{\ensuremath{\mathsf{edp}}}
\newcommand*{\natP}{\ensuremath{\mathbb{N}^{+}}}
\newcommand*{\Mult}{\ensuremath{\mathsf{edp_{m}}}}
\newcommand*{\CMult}{\ensuremath{\mathsf{edp_{cm}}}}
\newcommand*{\cmult}{\ensuremath{\mathsf{cmult}}}
\newcommand*{\Treengeling}{\texttt{Treengeling}\xspace}
\newcommand*{\Lingeling}{\texttt{Lingeling}\xspace}
\newcommand*{\Glucose}{\texttt{Glucose}\xspace}
\newcommand*{\druptrim}{\texttt{drup-trim}\xspace}
\renewcommand*{\mu}{\ensuremath{\textunderscore}}
\newtheorem{theorem}{Theorem}
\newtheorem{example}[theorem]{Example}
\newtheorem{proposition}[theorem]{Proposition}
\begin{document}

\title{Computer-Aided Proof of Erd\H{o}s Discrepancy Properties}
\author{Boris Konev    and
Alexei Lisitsa \\{Liverpool University}
}

\date{}
\maketitle

\begin{abstract}
In 1930s Paul Erd\H{o}s conjectured that for any positive integer $C$ in any
infinite $\pm 1$ sequence $(x_n)$ there exists a subsequence $x_d, x_{2d},
x_{3d},\dots, x_{kd}$, for some positive integers $k$ and $d$, such that
$\mid \sum_{i=1}^k x_{i\cdot d} \mid >C$.  
The conjecture has been referred to  as one of the major open problems in 
combinatorial number theory and discrepancy theory.  
For the particular case of $C=1$ a human proof of the conjecture exists; for
$C=2$ a bespoke computer program had generated sequences of length $1124$
of discrepancy $2$,
but the status of the conjecture remained open even for such a small bound.
We show that by encoding the problem into Boolean satisfiability and applying the 
state of the art SAT solvers, one can obtain a discrepancy $2$ sequence of length $1160$
and a \emph{proof} of the Erd\H{o}s discrepancy conjecture for
$C=2$, claiming that no discrepancy 2 sequence of length $1161$, or more, exists.    
In the similar way, we obtain a precise bound of $127\,645$ on the maximal
lengths of both multiplicative and completely multiplicative sequences of discrepancy
$3$.
\end{abstract}
\section{Introduction}
Discrepancy theory is a branch of mathematics dealing with 
irregularities of distributions of points in some space in combinatorial, measure-theoretic
and geometric settings~\cite{Beck,Chazelle,Matusek,BeckSos}. 
The paradigmatic combinatorial discrepancy theory  problem can be described in
terms of a hypergraph $\mathcal{H} = (U,S)$, that is, a set $U$  and a family
of its subsets $S \subseteq 2^{U}$. 
Consider a colouring $c : U \rightarrow \{+1,-1\}$ of the elements of $U$ in  
\emph{blue} $(+1)$ and \emph{red}  ($-1$) colours. Then one may ask whether there 
exists a colouring of the elements of ${U}$ such that colours are
distributed uniformly in every element of $S$ or a discrepancy of colours is always inevitable.
Formally, the discrepancy (deviation from a uniform distribution) of a hypergraph $\mathcal{H}$ is defined as
$\min_{c} (\max_{s \in S} \,| \sum_{e \in s} c(e) |\,)$.
Discrepancy theory has found applications 
in {computational complexity}~\cite{Chazelle},
complexity of communication~\cite{DBLP:journals/dam/Alon92}
and {differential privacy}~\cite{Muthukrishnan:2012:OPH:2213977.2214090}. 

One of the oldest problems of discrepancy theory is the 
discrepancy of hypergraphs over sets of natural numbers with the  subsets
(hyperedges) forming arithmetical progressions over these sets~\cite{MS96}. 
 Roth's theorem~\cite{Roth64}, one of the main results in the area, states that
for the hypergraph formed by the arithmetic progressions in $\{1,\dots, n\}$, that is
$\mathcal{H}_n=(U_n, S_n)$, where $U_n = \{1,2, \ldots, n\}$ and
elements of $S_n$ being of the form $(ai+b)$ for arbitrary $a, b$,
the discrepancy grows at least as $\frac{1}{20}n^{1/4}$.

\vspace*{1pt}

Surprisingly, for the more restricted case of \emph{homogeneous} arithmetic progressions of the form
$(ai)$, the question of the discrepancy bounds is open for more than eighty years.   
In 1930s Paul Erd\H{o}s conjectured~\cite{Unsolved} that
discrepancy is unbounded. Independently the same conjecture has been raised by 
\cite{chudakov}.
Proving or  disproving this conjecture became one of the major open problems in
combinatorial number theory and discrepancy theory. It has been referred to as the
\emph{Erd\H{o}s discrepancy problem} (EDP)~\cite{Beck,BeckSos,NiTa}.

The  expected value of the discrepancy of 
random  $\pm 1$ sequences of length $n$ grows as
$n^{1/2+o(1)}$ and the explicit constructions of a sequence with slowly growing
discrepancy at the rate of $\log_3 n$ have been demonstrated~\cite{gowers,BCC10}. 
By considering cases, one can see  that
any $\pm1$ sequence containing $12$ or more elements has discrepancy at least $2$;
that
is, Erd\H{o}s's conjecture holds for the particular case $C=1$ 
(also implied by a stronger result of Mathias~\shortcite{Mathias}). 
Until recently the status of the conjecture remained unknown for all other
values of $C$.  Although widely believed not to be the case, there was still a
possibility that an infinite sequence of discrepancy 2 existed. 

The conjecture whether the discrepancy of an arbitrary $\pm1$ sequence is
unbounded is equivalent to the question whether the discrepancy of a completely
multiplicative $\pm1$ sequence is unbounded, where a sequence is completely
multiplicative if $x_{m\cdot n} = x_m\cdot x_n$ for any $m,n$~\cite{Unsolved}.
For completely multiplicative sequences the choices of how the sequence can be
constructed are severely limited as the entire sequence is defined by the values of
$x_i$ for prime $i$. The longest completely multiplicative sequence of
discrepancy $2$ has length $246$~\cite{Polymath2}.  For discrepancy $3$ the
bound was not known.

The EDP has attracted renewed interest in 2009-2010  as it became a topic of the fifth
Poly\-math 
project~\shortcite{Polymath}
a widely publicised endeavour in
collective math initiated by T.~Gowers~\shortcite{gowersblog}.   As part of this
activity  an attempt has been made
to attack the problem using computers (see discussion in \cite{Polymath}). A
purposely written computer program had successfully found $\pm1$ sequences of
length $1124$ and discrepancy $2$; however, no further progress has been made
leading to a claim ``given how long a finite sequence can
be, it seems unlikely that we could answer this question just by a clever
search of all possibilities on a computer''~\cite{Polymath}.

The status of the Erd\H{o}s discrepancy conjecture for $C=2$ has been
settled by the authors of this article 
\cite{KLArx14}, \cite{KLSAT14} by reduction to SAT.  The method is
based on establishing the correspondence between $\pm1$ sequences that violate
a given discrepancy bound and words accepted by of a finite automaton.  Traces
of this automaton are represented then by a propositional formula and state
of the art SAT solvers are used to prove that the longest $\pm1$ sequence of
discrepancy $2$ contains $1160$ elements. 
A $13\,900$ long $\pm1$ sequence of discrepancy $3$ was also constructed.

This article is a revised and extended version of~\cite{KLSAT14}.  We use a
different smaller SAT encoding of the Erd\H{o}s discrepancy problem, which is
based on the sequential counter encoding of the \emph{at
most} cardinality constraints\footnote{We are grateful to Donald E. Knuth for
pointing us in that direction.}. The impact of the new encoding is twofold.
Firstly, it allows us to significantly reduce the size of the machine-generated
proof of the fact that any sequence longer than $1160$  has discrepancy at
least $3$.
Secondly, by combining the new encoding with additional restrictions that the
sequence is multiplicative, or completely multiplicative, we improve
significantly the lower bound on the length of sequences of discrepancy $3$. We
prove the surprising result that $127\,645$, the length  of the longest completely multiplicative
sequence of discrepancy $3$, is also the maximal length of a 
multiplicative sequence of discrepancy $3$, which is not the case for $C=1$ and
$C=2$.  The article also contains
detailed argumentation, examples and complete proofs.

The article is organised as follows. In Section~\ref{sec:preliminaries} we
introduce the main terms and definition. In Section~\ref{sec:encoding} we
describe the new SAT encoding of the Erd\H{o}s discrepancy problem.
Results and conclusions are discussed in Sections ~\ref{sec:experiments}
and~\ref{sec:conclusion} respectively.  To improve readability a number of
technical proofs have been deferred to an appendix.

\section{Preliminaries}\label{sec:preliminaries}
We divide this section into three parts: main definitions for the Erd\H{o}s 
discrepancy problem, some background and definitions for SAT solving, and
sequential counter-based SAT encoding of cardinality constraints.

Since number $1$ is used both as an element of $\pm1$ sequences and 
as the logical value \emph{true}, to avoid confusion,  in what follows we write
$1$ to refer to the logical value \emph{true} and $+1$ to refer to elements of
$\pm1$ sequences.  We also use the following naming convention: we write
$x_1,\dots x_n$ for $\pm1$ sequences, $p_1,\dots, p_n$ for sequences of
propositions, and $a_1,\dots, a_n$ for $0/1$ Boolean sequences.

\subsection{Discrepancy of $\pm$1 Sequences}
A $\pm1$ sequence of length $n$ is a function $\{1,\dots,n\}\to\{-1,+1\}$.  An
infinite $\pm1$ sequence is a function $\natP\to\{1,-1\}$, where $\natP$ is the
set of positive natural numbers.  We write $x_1,\dots, x_n$ to denote a finite
$\pm1$ sequence of length $n$, and $(x_n)$ to denote an infinite sequence. We
refer to  the $i$-th element of a sequence $x$, that is the value of $x(i)$,
as $x_i$. 
A (finite or infinite) $\pm1$ sequence $x$ is \emph{completely multiplicative} \cite{apostol} if
\begin{equation}\label{eq:multdef1}
        x_{m\cdot n} = x_m\cdot x_n,\; \textrm{for all $m,n\in \natP$.}
\end{equation}
The sequence is \emph{multiplicative} if (\ref{eq:multdef1}) is only required
for 
coprime $m$ and $n$.

It is easy to see that a sequence $x$ is
completely multiplicative if, and only if, $x_1 = +1$ and for
the canonical representation  $m = \prod_{i=1}^{k} {p_i}^{\alpha_i}$, where $p_1<
p_2<\dots< p_k$ are primes and $\alpha_i\in\natP$, 
we have $x_{m} = \prod_{i=1}^k (x_{p_i})^{\alpha_i}$.
This observation leads to a more computationally friendly definition of
completely multiplicative sequences: $x$ is completely multiplicative if,
and only if,
\begin{equation}\label{eq:multdef2}
\begin{minipage}{0.85\textwidth}$x_1 = +1$ and for every composite $m$ we have $x_m = x_i\cdot x_j$, for \emph{some}
$i$ and $j$, non-trivial divisors of $m$.
\end{minipage}
\end{equation}

\smallskip

The Erd\H{o}s discrepancy problem can be naturally described in terms of $\pm1$
sequences (and this is how Erd\H{o}s himself introduced it~\shortcite{Unsolved}).
Erd\H{o}s's conjecture states that for any $C>0$ in any infinite $\pm 1$
sequence $(x_n)$ there exists a subsequence $x_d, x_{2d}, x_{3d},\dots,
x_{kd}$, for some positive integers $k$ and $d$, such that $| \sum_{i=1}^k
x_{i\cdot d} \,| >C$. 

The general definition of discrepancy given above can be specialised in this
case as follows. The discrepancy of a finite  $\pm 1$ sequence
$ x_1,\dots,x_n$ of length $n$ can be defined as $\max_{d=1 ,\ldots, n} (|
\sum_{i=1}^{\lfloor \frac{n}{d} \rfloor } x_{i\cdot d} |)$. For an infinite
sequence $(x_n)$ its discrepancy is the supremum of discrepancies of all its
initial finite fragments.

\begin{example}\label{ex:edp1}
It is easy to see why any $\pm1$ sequence containing $12$ elements has
discrepancy at least $2$. For the proof of contradiction, suppose that 
the discrepancy of some $\pm1$ sequence $x_1,\dots, x_{12}$ is $1$. 
Assume that $x_1$ is $+1$.  
We write 
$$ 
   (+1,\mu,\mu,\mu,\mu,\mu,\mu,\mu,\mu,\mu,\mu,\mu)
$$
to track progress in this example, that is,  we put specific values 
$+1$ or $-1$ into positions $i: 1\leq i\leq 12$, to indicate decisions on $x_i$
which have been taken so far, and mark positions $x_i$, for which no decision has 
been made by an underscore.

Notice that $x_2$ must be $-1$ for otherwise $x_1+x_2 = 2$. So, we progress to
$$ 
   (+1,-1,\mu,\mu,\mu,\mu,\mu,\mu,\mu,\mu,\mu,\mu).
$$
Then the $4$th element of the sequence must be $+1$ for otherwise 
for $d=2$ the sum $x_d + x_{2d} = x_2+x_4 = -2$. So we progress to
$$ 
   (+1,-1,\mu, +1,\mu,\mu,\mu,\mu,\mu,\mu,\mu,\mu).
$$
Then the $3$rd element of the sequence must be $-1$ for otherwise $x_1+\dots+x_4 = 2$ and
so we come to
$$ 
   (+1,-1, -1, +1,\mu,\mu,\mu,\mu,\mu,\mu,\mu,\mu).
$$
Repeating the reasoning above for 
$x_3$ and $x_6$ followed by $x_5$,
for $x_4$ and $x_8$ followed by $x_7$, 
for $x_5$ and $x_{10}$ followed by $x_9$
and finally for $x_6$ and $x_{12}$ followed by $x_{11}$ we progress to
\begin{equation}\label{eq:example1}
   (+1,-1, -1, +1,-1,+1,+1,-1,-1,+1,+1,-1).
\end{equation}
But then for $d=3$ we have $x_{d}+x_{2d} + x_{3d} + x_{4d} = x_3 + x_6+ x_9 + x_{12} = -2$. 
So we derive a contradiction.
It can be checked  in a similar way that the other possibility of $x_1$ being
$-1$ also leads to a contradiction.

The first eleven elements of the sequence (\ref{eq:example1}) form a discrepancy $1$ sequence.
It is multiplicative but not completely multiplicative as $x_9$  is $-1$.
Reasoning similar to the one above shows that there
exists a unique longest completely multiplicative $\pm1$ sequence of
discrepancy $1$ which has nine elements:
$$
   (+1, -1, -1, +1, -1, +1, +1, -1, +1).
$$
\end{example}

\subsection{Propositional Satisfiability Problem}
We assume standard definitions for propositional logic (see, for example,
\cite{rautenberg}).  Propositional formulae are defined over Boolean constants
\emph{true} and \emph{false}, denoted by $1$ and $0$, respectively, and the set
of Boolean variables (or propositions) $PV$ as follows: Boolean constants $0$
and $1$, as well as the elements of $PV$, are formulae; if $\Phi$ and $\Psi$
are formulae then so are $\Phi\land\Psi$ (conjunction), $\Phi\lor\Psi$
(disjunction), $\Phi\rightarrow\Psi$ (implication), $\Phi\leftrightarrow\Psi$
(equivalence) and $\lnot \Phi$ (negation).  We
typically use letters $p$, $q$ and $s$ to denote propositions and 
capital Greek letters $\Phi$ and $\Psi$ to denote propositional formulae. Whenever
necessary, subscripts and superscripts are used. We use $\vars(\Phi)$ to denote
the set of all propositions occurring in the formula $\Phi$.

Every propositional formula can be reduced to conjunctive normal form.  Propositions and
negations of propositions are called \emph{literals}.
When the negation is applied to a literal, double negations are implicitly 
removed, that is, if $l$ is $\lnot p$ then $\lnot l$ is $p$.
A disjunction of literals
is called a \emph{clause}. A clause containing exactly one literal is called
a unit clause. A conjunction of clauses is called a
propositional formula in \emph{conjunctive normal form}, a CNF formula for short. 
A clause can be represented by the set of its literals and the empty 
clause correspond to $0$ (\emph{false}).
A CNF formula can be represented by the set its clauses. These
representations are used interchangeably. 
We typically use meaningful terms typeset in sans serif font,
for example $\edp$ or $\cmult$,
to highlight the fact that a propositional formula is a CNF formula of interest.

For a propositional formula $\Phi$, we write $\Phi(p_1,\dots,p_n)$ to indicate
that $\{p_1,\dots,p_n\}\subseteq \vars(\Phi)$. Propositions $p_1,\dots, p_n$
are designated as `input' propositions in this case, and the intended meaning
is that formula $\Phi$ encodes some property of $p_1,\dots, p_n$. Then the
expression $\Phi(q_1,\dots,q_n)$ denotes the result of simultaneous replacement
of every occurrence of $p_i$ in $\Phi$  with $q_i$, for $1\leq i \leq n$.

The semantics of propositional formulae is given by interpretations (also termed
assignments).  An
interpretation $I$ is a mapping $PV\to\{0,1\}$ extended to literals, clauses,
CNF formulae and propositional formulae in general in the usual way.
For an assignment $I$ and a formula $\Phi$ we say that $I$ satisfies $\Phi$ 
(or $I$ is a model of $\Phi$) if $I(\Phi) = 1$.
A formula $\Phi$ is satisfiable if there exists an assignment that satisfies
it, and unsatisfiable otherwise.

Despite the non-tractability of the satisfiability problem, the tremendous
progress in recent years made it possible to solve many interesting hard
problems by first expressing them as a propositional formula and then
using a {SAT solver} for obtaining a solution~\cite{satHB}. In addition to
returning a satisfying assignment, if the input formula is satisfiable, some
SAT solvers are also capable to return a proof (or certificate) of
unsatisfiability. 

Reverse Unit Propagation (RUP) proofs constitute a compact representation of 
the resolution refutation of the given formula~\cite{GoldbergN03} in the following sense.
\emph{Unit propagation} is a CNF formula transformation
technique, which simplifies the formula by fixing the values of 
propositions occurring to its unit clauses to satisfy these clauses. 
That is, if the unit clause $(p)$ occurs in the CNF formula then 
all occurrences of $p$ are replaced by $1$ and if the unit clause
$(\lnot p)$  occurs in the CNF formula, all occurrences of 
$p$ are replaced by $0$. Then the CNF formula is simplified in the obvious way.
A clause $C = (l_1,\dots l_m)$ is a RUP inference from the input CNF formula
$\Psi$ if adding the unit clauses $(\lnot l_1),\dots,(\lnot l_m)$ to $\Psi$ makes the whole
formula refutable by unit propagation.  A RUP unsatisfiability certificate is
the sequence of clauses $C_1,\dots C_m$ such that for every $1\leq i \leq m$ 
the clause $C_i$ is a RUP inference from $\Psi\cup\{C_1,\dots,C_{i-1}\}$ and
$C_m$ is the empty clause.  Every unsatisfiable CNF formula has a RUP
unsatisfiability certificate~\cite{GoldbergN03}.

Delete Reverse Unit Propagation (DRUP) proofs extend  RUP proofs by including
extra information about the proof search process, namely clauses that have been
discarded by the solver.  Eliminating this extra information from a DRUP proof
converts it to a valid RUP proof.  DRUP proofs are somewhat longer but they are
significantly faster to verify than a RUP proof~\cite{druptrim}.

\subsection{Sequential Counter-Based SAT Encoding of Cardinality Constraints}\label{sec:sinz_card}
Cardinality constraints~\cite{cardconstrHB} are expressions that impose restrictions on
interpretations by specifying numerical bounds on the number of propositions,
from a fixed set of propositions, that can be assigned value $1$. The \emph{at
most $r$} constraint over the set of propositions $\{p_1,\dots, p_n\}$, written
as $p_1+\dots+p_n\leq r$, holds for an interpretation $I$ if, and only if,
at most $r$ propositions among $p_1,\dots, p_n$ are true under $I$.

A SAT encoding for cardinality constraints of the
form $p_1+\dots + p_n\leq r$ based on a sequential counter circuit has been
suggested by Sinz~\shortcite{Sinz05}. In this encoding, auxiliary propositions
$\csf^k_j$ are introduced to represent a unary counter storing the partial sums
of prefixes of $p_1,\dots, p_n$ so that whenever $\sum_{i=1}^j p_i\geq k$, for
some $j\leq n$, we have $\csf^k_j=1$.

We slightly simplify the presentation of~\cite{Sinz05} as follows.
Let formula $\Sinz(p_1,\dots,p_n)$
be the conjunction of 
\begin{eqnarray}
\label{eq:sinz_card_1}%
  \csf^k_j \leftrightarrow (\csf^k_{j-1}\lor (\csf^{k-1}_{j-1}\land p_j)), &\quad & \textrm{for all $1\leq k\leq n$, $1\leq j\leq n$};\\
\label{eq:sinz_card_2}%
  (\lnot \csf^k_j), & &\textrm{for all $0\leq j<k \leq n$};\\
\label{eq:sinz_card_3}%
  (\csf^k_j), & &\textrm{for $k=0$ and all $0\leq j \leq n$}.
\end{eqnarray}
Recall that we write $\Sinz(p_1,\dots,p_n)$ to highlight the fact that 
$p_1,\dots, p_n$ are designated `input' propositions; the set of 
all propositions 
of $\Sinz(p_1,\dots,p_n)$ is
$\vars(\Sinz(p_1,\dots,p_n)) = \{p_1,\dots,p_n\}\cup\bigcup_{k=1}^n\bigcup_{j=1}^n\{\csf^k_j\}$.

Notice that instead of including formulae (\ref{eq:sinz_card_2}) and
(\ref{eq:sinz_card_3}) explicitly in the encoding, one can directly
modify (\ref{eq:sinz_card_1}) by replacing  
all occurrences of $\csf^k_j$, for $0\leq j<k \leq n$, with $0$ (the truth
value \emph{false}) and all occurrences of $\csf^k_j$, for $k=0$ and all 
$0\leq j \leq n$, with $1$ (the truth value \emph{true}).
Then, for example, for $k=j=1$ formula (\ref{eq:sinz_card_1}) simplifies to
$\csf^1_1\leftrightarrow p_1$. We write (\ref{eq:sinz_card_2}) and
(\ref{eq:sinz_card_3})  explicitly for the exposition purposes.

The proof of the following statement can be extracted from~\cite{Sinz05}. 
It is based on the observation that the sum of the first $j$ elements of the $0/1$ sequence
$p_1,\dots, p_n$ exceeds $k$ if, and only if, either the sum of the first 
$j-1$ elements already exceeds $k$, or the sum of the first $j-1$ elements is $k$ and 
the $j$-th element of the sequence is $1$.
We give the formal proof in an appendix for completeness of the presentation.
\begin{proposition}\label{prop:sinz}
Let $\Sinz(p_1,\dots,p_n)$ be as defined above.  Then 
\begin{enumerate}
\item[(i)] For any assignment $I:\vars(\Sinz(p_1,\dots,p_n))\to\{0,1\}$ such that $I$ satisfies $\Sinz(p_1,\dots,p_n)$,
any $1\leq j\leq n$ and $1\leq k\leq n$
we have 
$$
    I(\csf^k_j) = 1\quad \textrm{if, and only if,}\quad \quad\sum_{i=1}^j I(p_i) \geq k.
$$
\item[(ii)] For any $0/1$-sequence  $(a_1,\dots, a_n)\in\{0,1\}^n$ 
 there exists an assignment $I:\vars(\Sinz(p_1,\dots,p_n))\to \{0,1\}$ such that 
  $I(p_i) = a_i$, for $1\leq i \leq n$;
  $I$ satisfies $\Sinz(p_1,\dots,p_n)$; and 
  for any $r\leq n$ and $j\leq n$ if $\sum_{i=1}^j a_i \leq r$ then $I(\csf^k_j) = 0$, for $r<k\leq n$.
\end{enumerate}
\end{proposition}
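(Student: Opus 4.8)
The plan is to prove both parts from a single ``correctness invariant'' for the counter bits. For part~(i) I would argue by induction on $j$, simultaneously over all $k$, including the degenerate indices $k=0$ and $j=0$ whose bits $\csf^0_j$, $\csf^k_0$ (and more generally $\csf^k_j$ with $j<k$) are pinned by clauses~(\ref{eq:sinz_card_2}) and~(\ref{eq:sinz_card_3}). In the base case $j=0$ the prefix sum is empty, so $\sum_{i=1}^{0}I(p_i)=0$; clause~(\ref{eq:sinz_card_3}) forces $I(\csf^0_0)=1$, matching $0\ge 0$, and clause~(\ref{eq:sinz_card_2}) forces $I(\csf^k_0)=0$ for $k\ge 1$, matching $0\ge k$ being false. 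So the invariant holds at $j=0$.

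For the inductive step, fix $j\ge 1$ and assume the invariant at $j-1$ for every $k$. I would split on $k$. If $k=0$, clause~(\ref{eq:sinz_card_3}) gives $I(\csf^0_j)=1$, matching $\sum\ge 0$. If $k>j$, clause~(\ref{eq:sinz_card_2}) gives $I(\csf^k_j)=0$, matching $\sum_{i=1}^{j}I(p_i)\le j<k$. If $1\le k\le j$, then clause~(\ref{eq:sinz_card_1}) forces $I(\csf^k_j)=I(\csf^k_{j-1})\lor(I(\csf^{k-1}_{j-1})\land I(p_j))$; writing $S=\sum_{i=1}^{j-1}I(p_i)$ and $b=I(p_j)\in\{0,1\}$, the induction hypothesis (legitimately applied, since $0\le k-1$ and $k\le n$) rewrites the right-hand side as $[S\ge k]\lor([S\ge k-1]\land b=1)$, and the elementary identity $[S\ge k]\lor([S\ge k-1]\land b=1)=[S+b\ge k]$, checked for $b=0$ and $b=1$ separately, closes the step because $S+b=\sum_{i=1}^{j}I(p_i)$. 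This establishes~(i); note the model property of $I$ is used only through clauses~(\ref{eq:sinz_card_1})--(\ref{eq:sinz_card_3}) in these cases.

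For part~(ii), given $(a_1,\dots,a_n)$ I would simply \emph{define} $I(p_i)=a_i$ and $I(\csf^k_j)=1$ iff $\sum_{i=1}^{j}a_i\ge k$ for all relevant $j,k$ (the degenerate indices set by the same rule, which agrees with clauses~(\ref{eq:sinz_card_2}),(\ref{eq:sinz_card_3})), and then verify $I\models\Sinz(p_1,\dots,p_n)$: clauses~(\ref{eq:sinz_card_2}) and~(\ref{eq:sinz_card_3}) hold because $0\le\sum_{i=1}^{j}a_i\le j$, and clause~(\ref{eq:sinz_card_1}) holds by exactly the identity used in part~(i). Since $I$ is then a model with $I(p_i)=a_i$, the final property is immediate: if $\sum_{i=1}^{j}a_i\le r<k$ then $\sum_{i=1}^{j}a_i<k$, hence $I(\csf^k_j)=0$.

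The argument is essentially bookkeeping, so the only ``hard part'' is discipline: keeping the index ranges straight so the induction hypothesis is always invoked at legal indices, and being consistent about whether the boundary symbols are treated as pinned propositions via~(\ref{eq:sinz_card_2})--(\ref{eq:sinz_card_3}) or eliminated as constants as in the remark following the definition of $\Sinz$; I would fix the former convention throughout. No step presents a genuine mathematical obstacle.
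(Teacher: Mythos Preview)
Your proof is correct and follows essentially the same approach as the paper's: both establish~(i) by induction using the recurrence~(\ref{eq:sinz_card_1}) together with the boundary clauses~(\ref{eq:sinz_card_2})--(\ref{eq:sinz_card_3}), and both derive~(ii) from the fact that the counter bits are determined by the $p_i$'s. The only cosmetic differences are that the paper inducts on the lexicographic order on pairs $(j,k)$ with base case $j=1$ rather than on $j$ alone with base case $j=0$, and for~(ii) it argues abstractly that the extension exists and is unique (then invokes~(i)) rather than writing down the explicit assignment $I(\csf^k_j)=[\sum_{i\le j}a_i\ge k]$ and verifying the clauses directly; your treatment of the boundary cases is arguably tidier.
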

It follows from Proposition~\ref{prop:sinz} that the formula
$(\Sinz(p_1,\dots,p_n)\land \lnot \csf^{r+1}_n)$ enforces the cardinality
constraint $p_1+\dots+p_n\leq r$. 

Formula (\ref{eq:sinz_card_1}) can be equivalently rewritten into clausal form:
\begin{eqnarray}
\label{eq:equiv_clause_1}%
  (\lnot \csf^k_j \lor \csf^k_{j-1}\lor \csf^{k-1}_{j-1}) & & \\
\label{eq:equiv_clause_2}%
  (\lnot \csf^k_j \lor \csf^k_{j-1}\lor p_j) & & \\
\label{eq:equiv_clause_3}%
  (\lnot \csf^k_{j-1}\lor \csf^k_j ) & & \\
\label{eq:equiv_clause_4}%
  (\lnot \csf^{k-1}_{j-1}\lor \lnot p_j\lor \csf^k_j), & & 
\end{eqnarray}
{where $1\leq k\leq n$, $1\leq j\leq n$}.

\smallskip

Notice that the original encoding in~\cite{Sinz05} only contains clauses
(\ref{eq:equiv_clause_3}) and (\ref{eq:equiv_clause_4}) due to a 
polarity-based optimisation based on Tseitin's~\shortcite{Tseitin70} renaming techniques.
One can see that
by unit propagation of clauses (\ref{eq:sinz_card_2}), (\ref{eq:sinz_card_3}) and
$(\lnot \csf^{r+1}_n)$ into (\ref{eq:equiv_clause_3}) and
(\ref{eq:equiv_clause_4}) we obtain exactly the set of clauses used
in~\cite{Sinz05}, which consists of $O(nr)$ clauses and requires $O(nr)$
auxiliary propositions. 
This optimisation leads to a relaxation in
item~(\textit{i}) of Proposition~\ref{prop:sinz}: Suppose that an assignment
$I$, which
satisfies 
(\ref{eq:sinz_card_2}),
(\ref{eq:sinz_card_3}),
(\ref{eq:equiv_clause_3}) and
(\ref{eq:equiv_clause_4}),
is such that $I(p_1)+\dots +I(p_n) \leq r' < r$.  Then the value of
$\csf^k_n$ under $I$ can still be true, as long as $r'<k\leq r$, while not
violating the cardinality constraint $p_1+\dots p_n\leq r$.  

For our purposes 
we require an unoptimised version of the encoding with 
a tighter restriction on the values of $\csf^k_j$ 
stated in Proposition~\ref{prop:sinz}.

\section{SAT Encoding of the discrepancy problem}\label{sec:encoding}

We say that a $\pm1$ sequence $x_1,\dots, x_n$ is \emph{$C$-bounded}, for some $C > 0$,
if  $|\sum_{i=1}^{j}x_i| \leq C$, for all $1\leq j \leq n$.
We extend the notion of $C$-boundedness to Boolean $0/1$ sequences using the 
relation between $\pm1$ sequences $x_1,\dots,x_n$ and Boolean
$\{0,1\}$-sequences $p_1,\dots, p_n$ defined as follows: $x_i = 2p_i-1$ (in other
words, $+1$ is encoded by the Boolean value \emph{true}, and $-1$ is encoded by
the Boolean value \emph{false}). Then a $0/1$ sequence 
$a_1,\dots, a_n$ is $C$-bounded if for every $j>0$ 
the disbalance between the number of $1$s in $a_1,\dots, a_j$ and the 
number of $0$s in $a_1,\dots, a_j$ is at most $C$.

We build our SAT encoding of the Erd\H{o}s discrepancy problem on 
the sequential counter-based encoding of cardinality constraints described
in Section~\ref{sec:sinz_card}. 
We illustrate our approach by the following consideration.
By Proposition~\ref{prop:sinz}, an arbitrary 
assignment of $0/1$ values to propositions $p_1,\dots,p_n$ can be uniquely 
extended to a model of $\Sinz(p_1,\dots,p_n)$. We 
denote this model as $I$.
Suppose that the value of some $\csf^k_j$ under $I$ is false.
Then, by Proposition~\ref{prop:sinz}, the sequence $I(p_1),\dots,I(p_j)$ 
contains at most $k-1$ occurrences of $1$ and so it contains at least $j-(k-1)$
occurrences of $0$. Therefore, the disbalance between the number of occurrences
of $0$ and the number of occurrences of $1$ is at least $j-2k+2$.  If
$j>2k-2+C$ then the sequence $I(p_1),\dots,I(p_j)$ is not $C$-bounded. 
Thus, in our encoding of  $C$-bounded sequences we need to exclude such a possibility.
This can be achieved by conjoining formula $\Sinz(p_1,\dots,p_n)$  with
\begin{equation}
\label{eq:equiv_clause_8}%
  (\csf^k_j), \quad \textrm{for $1\leq k \leq n$ and $2k-2+C < j \leq n$}.
\end{equation}

Similarly, if the value of $\csf^k_j$ is true, for some $j<2k-C$, then 
the number of $1$s exceeds the number of $0$s by more than $C$, so these
possibilities should also be excluded by
\begin{equation}
\label{eq:equiv_clause_6}%
  (\lnot \csf^k_j),\quad  \textrm{for $1\leq k\leq n$ and  $0\leq j<2k-C $}.
\end{equation}

To summarise, let propositional formula $\Eq^{C}(p_1,\dots,p_n)$ be the conjunction of $\Sinz(p_1,\dots,p_n)$,
with (\ref{eq:equiv_clause_8}) and
(\ref{eq:equiv_clause_6}).
Notice that, as in the case of  $\Sinz(p_1,\dots,p_n)$, 
we write (\ref{eq:equiv_clause_8}) and (\ref{eq:equiv_clause_6}) explicitly
for the ease of explanation. The proof of the following theorem can 
be found in Appendix~\ref{sec:proofs}.
\begin{theorem}\label{th:cbounded}
For any assignment $I:\{p_1,\dots,p_n\}\to\{0,1\}$ the following holds: 
there exists an extension of $I$ to $I': \vars(\Eq^{C}(p_1,\dots,p_n))\to\{0,1\}$ that is 
a model of $\Eq^{C}(p_1,\dots,p_n)$ if, and only if, 
the sequence $I(p_1),\dots, I(p_n)$ is $C$-bounded.
\end{theorem}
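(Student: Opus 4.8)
The plan is to reduce the statement to a single arithmetical fact about prefix sums, and then observe that the clauses (\ref{eq:equiv_clause_8}) and (\ref{eq:equiv_clause_6}) are exactly what is needed to rule out the bad cases. Write $S_j=\sum_{i=1}^{j}I(p_i)$ for the number of $1$s among the first $j$ inputs; then the number of $0$s is $j-S_j$, so the disbalance after $j$ steps is $|2S_j-j|$, and hence the sequence $I(p_1),\dots,I(p_n)$ is $C$-bounded if, and only if, $|2S_j-j|\le C$ for every $1\le j\le n$. I would record this at the outset. The second ingredient is Proposition~\ref{prop:sinz}: part~(ii) gives that the restriction of $I$ to $p_1,\dots,p_n$ extends to \emph{some} model of $\Sinz(p_1,\dots,p_n)$, while part~(i) shows this extension is forced, in the sense that in \emph{any} model $I'$ of $\Sinz(p_1,\dots,p_n)$ we have $I'(\csf^k_j)=1$ iff $S_j\ge k$ for all $1\le j,k\le n$ (and for the remaining index pairs $\csf^k_j$ is pinned to $0$ or $1$ by (\ref{eq:sinz_card_2}) and (\ref{eq:sinz_card_3})). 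Since (\ref{eq:equiv_clause_8}) and (\ref{eq:equiv_clause_6}) introduce no new variables, $\vars(\Eq^{C}(p_1,\dots,p_n))=\vars(\Sinz(p_1,\dots,p_n))$, so an ``extension'' always means an assignment to this common variable set.

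For the ``if'' direction, assume $|2S_j-j|\le C$ for all $j$ and let $I'$ be the (unique, by Proposition~\ref{prop:sinz}) model of $\Sinz(p_1,\dots,p_n)$ extending $I$; it remains to check it satisfies the extra clauses. Take a clause $(\csf^k_j)$ from (\ref{eq:equiv_clause_8}), so $2k-2+C<j\le n$: then $k-1<(j-C)/2\le S_j$, the second inequality being $2S_j-j\ge-C$, hence $S_j\ge k$ and $I'(\csf^k_j)=1$ by Proposition~\ref{prop:sinz}(i). Take a clause $(\lnot\csf^k_j)$ from (\ref{eq:equiv_clause_6}), so $0\le j<2k-C$: if $j=0$ the clause holds by (\ref{eq:sinz_card_2}); otherwise $S_j\le(j+C)/2<k$, the first inequality being $2S_j-j\le C$, hence $I'(\csf^k_j)=0$ by Proposition~\ref{prop:sinz}(i). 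So $I'$ is a model of $\Eq^{C}(p_1,\dots,p_n)$.

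For the ``only if'' direction, let $I'$ be a model of $\Eq^{C}(p_1,\dots,p_n)$; in particular it models $\Sinz(p_1,\dots,p_n)$, so Proposition~\ref{prop:sinz}(i) applies. Suppose, for contradiction, that $|2S_j-j|>C$ for some $1\le j\le n$. If $2S_j-j>C$, set $k=S_j$: one checks $1\le k\le n$, that $j<2k-C$ (so $(\lnot\csf^k_j)$ is among the clauses (\ref{eq:equiv_clause_6})), and that $S_j\ge k$, so Proposition~\ref{prop:sinz}(i) gives $I'(\csf^k_j)=1$, contradicting that clause. If $2S_j-j<-C$, set $k=S_j+1$: one checks $1\le k\le n$, that $2k-2+C<j$ (so $(\csf^k_j)$ is among the clauses (\ref{eq:equiv_clause_8})), and that $S_j<k$, so Proposition~\ref{prop:sinz}(i) gives $I'(\csf^k_j)=0$, contradicting that clause. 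Hence $|2S_j-j|\le C$ for all $j$, i.e.\ $I(p_1),\dots,I(p_n)$ is $C$-bounded.

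I do not anticipate a genuine obstacle: once $C$-boundedness is rephrased as the prefix-sum inequality $|2S_j-j|\le C$ and the counter propositions are evaluated via Proposition~\ref{prop:sinz}(i), everything comes down to matching the index ranges in (\ref{eq:equiv_clause_8}) and (\ref{eq:equiv_clause_6}) against that inequality. The only point requiring care is the edge bookkeeping --- strict versus non-strict inequalities, the degenerate case $j=0$ in (\ref{eq:equiv_clause_6}), and verifying that the witnessing index $k$ stays in $\{1,\dots,n\}$ (which follows from $S_j\le j\le n$ together with the case hypotheses) --- which is precisely why the full argument is relegated to the appendix.
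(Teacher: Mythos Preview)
Your proof is correct and follows essentially the same approach as the paper's: extend $I$ to the unique model of $\Sinz(p_1,\dots,p_n)$ via Proposition~\ref{prop:sinz}, then match the index ranges in (\ref{eq:equiv_clause_8}) and (\ref{eq:equiv_clause_6}) against the prefix-sum inequality $|2S_j-j|\le C$, using the witnesses $k=S_j$ and $k=S_j+1$ for the two bad cases in the converse direction. Your presentation is in fact slightly tidier---you introduce $S_j$ explicitly, argue the ``if'' direction directly rather than by contradiction, and handle the $j=0$ edge case in (\ref{eq:equiv_clause_6}) explicitly---but the substance is the same.
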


As (\ref{eq:sinz_card_1}) is logically equivalent to the set of clauses
(\ref{eq:equiv_clause_1})--(\ref{eq:equiv_clause_4}), formula 
$\Eq^{C}(p_1,\dots,p_n)$  is logically equivalent to 
the set of clauses $S$ consisting of 
(\ref{eq:equiv_clause_1})--(\ref{eq:equiv_clause_4}), 
(\ref{eq:sinz_card_2}), 
(\ref{eq:sinz_card_3}), 
(\ref{eq:equiv_clause_8})
and
(\ref{eq:equiv_clause_6}). 
Let $\CBound^C(p_1,\dots,p_n)$ be the result of applying unit propagation to $S$ in an 
exhaustive manner.
One can see that
the set $\CBound^C(p_1,\dots,p_n)$ contains less than $C\cdot n$ auxiliary propositions
and less than $4C\cdot n$ clauses.

\begin{example}
We construct $\CBound^2(p_1,\dots,p_5)$,
a clausal representation of the statement that the sequence
$p_1,\dots, p_5$ is $2$-bounded.
We also demonstrate how clauses (\ref{eq:sinz_card_2}), (\ref{eq:sinz_card_3}),  
(\ref{eq:equiv_clause_8})
and 
(\ref{eq:equiv_clause_6})
are unit propagated into clauses (\ref{eq:equiv_clause_1})--(\ref{eq:equiv_clause_4}).

Notice that for $k=1$ every instance of clause (\ref{eq:equiv_clause_1}) contains 
literal $\csf^0_{j-1}$, the only literal of the unit clause (\ref{eq:sinz_card_3}).
Thus, every instance of clause (\ref{eq:equiv_clause_1}) for $k=1$ is redundant.

For $k=2$ and $j=1$, clause (\ref{eq:equiv_clause_1}) contains $\lnot \csf^2_1$, the
only literal of the unit clause (\ref{eq:sinz_card_2}), so for 
$k=2$ and $j=1$, the instance of clause (\ref{eq:equiv_clause_1})  is also redundant.

For $k=2$ and $j=2$,  an instance of
the unit clause (\ref{eq:sinz_card_2}), namely $\lnot\csf^2_1$, unit propagates into 
(\ref{eq:equiv_clause_1}) resulting in a 2-CNF clause 
$(\lnot \ssf^2_2 \lor \ssf^1_1)$.

For $k=2$ and $j=3$, (\ref{eq:equiv_clause_1}) instantiates to 
$ (\lnot \ssf^2_3 \lor \ssf^2_2 \lor \ssf^1_2)$.

Finally, for $k=2$  and 
for $4\leq j\leq 5$, clause (\ref{eq:equiv_clause_1}) contains $\csf^1_{j-1}$, 
the only literal of the unit clause  (\ref{eq:equiv_clause_8}). Thus, for 
$k=2$ and $4\leq j\leq 5$ the instances of clause (\ref{eq:equiv_clause_1}) are redundant.

By a further consideration of cases, one can see that the set of all
non-redundant simplified instance of clause (\ref{eq:equiv_clause_1}), for
$1\leq k \leq 5$ and $j\leq j\leq 5$, consists of 
\begin{equation}\label{eq:ex1.1}
\begin{array}{l@{\quad}l}
(\lnot \ssf^2_2 \lor \ssf^1_1)                &  (\lnot \ssf^3_4 \lor \ssf^2_3)      \\[2pt]
(\lnot \ssf^2_3 \lor \ssf^2_2 \lor \ssf^1_2)  &  (\lnot \ssf^3_5 \lor \ssf^3_4 \lor \ssf^2_4).
\end{array}
\end{equation}
We group here the clauses in such a way that all clauses in one column correspond to the same
value of the parameter $k$.

Similarly, instances of clause (\ref{eq:equiv_clause_2}), for $1\leq k \leq 5$ and $1\leq j\leq 5$ 
are simplified with the help of unit clause
(\ref{eq:sinz_card_2}), (\ref{eq:sinz_card_3}), 
(\ref{eq:equiv_clause_8})  
and 
(\ref{eq:equiv_clause_6})
to
\begin{equation}\label{eq:ex1.2}
\begin{array}{l@{\qquad}l@{\qquad}l}
(\lnot \ssf^1_1 \lor p_1 )              &  (\lnot \ssf^2_2 \lor p_2)               & (\lnot \ssf^3_4 \lor p_4)               \\[2pt]
(\lnot \ssf^1_2 \lor \ssf^1_1 \lor p_2) &  (\lnot \ssf^2_3 \lor \ssf^2_2 \lor p_3) & (\lnot \ssf^3_5 \lor \ssf^3_4 \lor p_5). \\[2pt]
(\ssf^1_2 \lor p_3 )                    &  (\lnot \ssf^2_4 \lor \ssf^2_3 \lor p_4) &\\[2pt]
\mbox{}                                 &  (\ssf^2_3 \lor p_5)                     &\\[2pt]
\end{array}
\end{equation}
The set of all non-redundant simplified instances of clause (\ref{eq:equiv_clause_3}) consists of
\begin{equation}\label{eq:ex1.3}
\begin{array}{lll}
(\lnot \ssf^1_1 \lor \ssf^1_2) & (\lnot \ssf^2_2 \lor \ssf^2_3)     & (\lnot \ssf^3_4 \lor \ssf^3_5)\\[2pt]
                               & (\lnot \ssf^2_3 \lor \ssf^2_4)    &
\end{array}
\end{equation}
and set of all non-redundant simplified instances of clause (\ref{eq:equiv_clause_4}) consists of
\begin{equation}\label{eq:ex1.4}
\begin{array}{l@{\qquad}l@{\qquad}l@{\qquad}l}
(\lnot p_1 \lor \ssf^1_1) &  (\lnot \ssf^1_1 \lor \lnot p_2 \lor \ssf^2_2) &  (\lnot \ssf^2_2 \lor \lnot p_3)                                             & (\lnot \ssf^3_4 \lor \lnot p_5).\\[2pt]
(\lnot p_2 \lor \ssf^1_2) &  (\lnot \ssf^1_2 \lor \lnot p_3 \lor \ssf^2_3) &  (\lnot \ssf^2_3 \lor \lnot p_4 \lor \ssf^3_4)&\\[2pt]
\mbox{}                   &  (\lnot p_4 \lor \ssf^2_4)                     &  (\lnot \ssf^2_4 \lor \lnot p_5 \lor \ssf^3_5)& \\[2pt]
\end{array}
\end{equation}
Thus, the set $\CBound^2(p_1,\dots,p_5)$ consists of 26 clauses grouped in
(\ref{eq:ex1.1})--(\ref{eq:ex1.4}) above.
\end{example}

Any $\pm1$-sequence containing less than or equal to $C$ elements is always $C$-bounded.
It should be clear then that the discrepancy of a $\pm1$ sequence $x_1,\dots,
x_n$ is bounded by $C$ if, and only if, for every $d: 1\leq d \leq
\lfloor\frac{n}{C+1}\rfloor$ the subsequence $x_d, x_{2d}\dots, x_{\lfloor n/d
\rfloor\cdot d}$ is $C$-bounded.
Then we define
\begin{equation}\label{eq:edp}
\edp(C,n) = \bigwedge_{d=1}^{\lfloor\frac{n}{C+1}\rfloor}\CBound^C(p_d, p_{2d},\dots, p_{\lfloor n/d\rfloor\cdot d}).
\end{equation}
We assume here that for the different values of $d$ sets
$\CBound^C(p_d, x_{2d},\dots, p_{\lfloor n/d\rfloor\cdot d})$ share the same input
propositions $p_1,\dots, p_n$ but use different auxiliary propositions
$\csf^k_j$.  Then the following theorem is a direct consequence of
Theorem~\ref{th:cbounded}
\begin{theorem}\label{th:discrepancy}
For any assignment $I:\{p_1,\dots,p_n\}\to\{0,1\}$ the following holds: 
there exists an extension of $I$ to $I': \vars(\edp(C,n))\to\{0,1\}$ that is 
a model of $\edp(C,n)$ if, and only if, 
$I(p_1),\dots, I(p_n)$ encodes a 
$\pm1$ sequence $x_1,\dots, x_n$ of length $n$ and discrepancy at most $C$.
\end{theorem}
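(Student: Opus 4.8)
The plan is to reduce Theorem~\ref{th:discrepancy} to Theorem~\ref{th:cbounded} by unwinding the definition~(\ref{eq:edp}) of $\edp(C,n)$ as a conjunction over $d$ of $\CBound^C$-formulae on the arithmetic-progression subsequences, and then to invoke the combinatorial characterisation stated just before~(\ref{eq:edp}): the discrepancy of $x_1,\dots,x_n$ is at most $C$ if and only if every subsequence $x_d,x_{2d},\dots,x_{\lfloor n/d\rfloor d}$ is $C$-bounded, where the range of $d$ may be truncated at $\lfloor n/(C+1)\rfloor$ because any progression with $\lfloor n/d\rfloor\le C$ terms is automatically $C$-bounded. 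So the theorem is essentially a bookkeeping argument that ``a model of a conjunction exists iff each conjunct has a compatible model'', made slightly delicate by the sharing of input propositions across conjuncts.

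First I would fix an assignment $I:\{p_1,\dots,p_n\}\to\{0,1\}$ and let $x_1,\dots,x_n$ be the $\pm1$ sequence it encodes via $x_i=2I(p_i)-1$. For the ``if'' direction, assume this sequence has discrepancy at most $C$. By the characterisation above, for each $d$ with $1\le d\le\lfloor n/(C+1)\rfloor$ the subsequence $x_d,x_{2d},\dots,x_{\lfloor n/d\rfloor d}$ is $C$-bounded, i.e.\ the $0/1$-sequence $I(p_d),I(p_{2d}),\dots$ is $C$-bounded. Applying Theorem~\ref{th:cbounded} to each such $d$ gives, for each $d$, an extension of $I$ restricted to $\{p_d,p_{2d},\dots\}$ to a model of $\CBound^C(p_d,p_{2d},\dots)$. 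The key point is that these extensions only assign fresh auxiliary propositions $\csf^k_j$ that are, by the stated convention following~(\ref{eq:edp}), disjoint across different $d$ (only the input propositions are shared, and on those all the extensions agree with $I$). Hence the union of $I$ with all these auxiliary-variable assignments is well defined and is simultaneously a model of every conjunct, therefore a model $I'$ of $\edp(C,n)$.

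For the ``only if'' direction, suppose $I'$ extends $I$ and models $\edp(C,n)$. Then $I'$ models each conjunct $\CBound^C(p_d,p_{2d},\dots)$, so the ``only if'' part of Theorem~\ref{th:cbounded} (applied to the restriction of $I'$ to the variables of that conjunct) tells us that $I(p_d),I(p_{2d}),\dots,I(p_{\lfloor n/d\rfloor d})$ is $C$-bounded for every $d\le\lfloor n/(C+1)\rfloor$; for $d>\lfloor n/(C+1)\rfloor$ the progression has at most $C$ terms and is $C$-bounded trivially. By the characterisation of discrepancy, $x_1,\dots,x_n$ then has discrepancy at most $C$. One should also note that $\edp(C,n)$ may be presented in clausal form (as $\CBound^C$ is the result of exhaustive unit propagation on the clausal form of $\Eq^C$), but since unit propagation and clause-rewriting preserve logical equivalence, Theorem~\ref{th:cbounded}'s statement about $\Eq^C$ transfers to $\CBound^C$ without change.

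The main obstacle — really the only thing requiring care — is making the variable-disjointness argument airtight, i.e.\ verifying that combining the per-$d$ models causes no clash. This hinges on the convention stated after~(\ref{eq:edp}) that distinct $d$ use distinct auxiliary propositions while sharing the input propositions, together with the fact that in Theorem~\ref{th:cbounded} the extension one obtains touches only auxiliary variables. I would state this explicitly as a one-line lemma (``a family of assignments agreeing on a common set $V$ of variables and otherwise having pairwise disjoint domains has a common extension'') so that the rest of the proof is a direct, essentially mechanical appeal to Theorem~\ref{th:cbounded} and the discrepancy characterisation. Everything else — the truncation of the range of $d$, the translation between $\pm1$-boundedness and $0/1$-boundedness — has already been set up in the surrounding text and needs only to be cited.
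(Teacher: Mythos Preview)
Your proposal is correct and follows exactly the approach the paper intends: the paper itself does not give a detailed proof of Theorem~\ref{th:discrepancy} but simply declares it ``a direct consequence of Theorem~\ref{th:cbounded}'' after noting the disjointness of auxiliary propositions across different $d$. Your write-up is a faithful unpacking of that one-line justification, including the equivalence of $\CBound^C$ and $\Eq^C$ under unit propagation and the truncation of the range of $d$ --- all of which the paper states in the surrounding text without assembling into a formal proof.
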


We conclude this section with a description of two optimisations, which we
present in the form of propositions. Both  reduce significantly the size of 
the unsatisfiability certificate and have some noticeable effect on the running
time.  The first optimisation allows one to remove the `don't care'
propositions, which do not affect the satisfiability of the problem. The second
optimisation breaks the symmetry in the problem.

\begin{proposition}\label{prop:oddeven}
Suppose that a sequence $a_1,\dots, a_n$ is $C$-bounded and either
$n$ is odd and $C$ is even or 
$n$ is even and $C$ is odd.
Then for an arbitrary value $b$ the sequence
$a_1,\dots, a_n, b$ is $C$-bounded.
\end{proposition}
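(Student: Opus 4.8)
The plan is to argue directly about prefix sums. Recall that a sequence $a_1,\dots,a_n$ over $\{0,1\}$ is $C$-bounded precisely when for every prefix the disbalance between the count of $1$s and the count of $0$s has absolute value at most $C$; equivalently, working with the $\pm 1$ sequence $x_i = 2a_i-1$, the partial sum $P_j = \sum_{i=1}^j x_i$ satisfies $|P_j|\le C$ for all $1\le j\le n$. Appending a new element $b$ only adds one new prefix sum, $P_{n+1} = P_n \pm 1$, so the only thing to check is that $|P_n \pm 1|\le C$ given $|P_n|\le C$. This fails exactly when $|P_n| = C$ and the step is in the same direction as the sign of $P_n$, i.e. when $P_{n+1} = \pm(C+1)$.

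So the core of the argument is a parity observation: $P_j \equiv j \pmod 2$, since each $x_i$ is odd. Under the hypothesis that $n$ and $C$ have opposite parities, $P_n$ and $C$ have opposite parities as well, hence $|P_n|\ne C$; combined with $|P_n|\le C$ this gives $|P_n|\le C-1$. Therefore $|P_{n+1}| = |P_n \pm 1|\le |P_n| + 1 \le C$ regardless of the value of $b$, which is exactly the claim. First I would state the prefix-sum reformulation of $C$-boundedness (already implicit in the definition given in the paper), then record the parity fact $P_j\equiv j\pmod 2$, then combine the two in a single short deduction.

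I do not anticipate a real obstacle here — the proposition is a parity triviality once one thinks in terms of partial sums rather than of the Boolean disbalance. The only point demanding a little care is making the translation between the ``$0/1$ disbalance'' language of the paper and the ``$\pm 1$ partial sum'' language clean, so that the parity step is unambiguous; I would handle this by fixing the correspondence $x_i = 2a_i - 1$ up front and phrasing everything in terms of $P_j$. If one prefers to avoid the $\pm 1$ reformulation entirely, the same argument runs with $D_j = (\text{number of }1\text{s in }a_1,\dots,a_j) - (\text{number of }0\text{s})$, noting $D_j \equiv j \pmod 2$; the structure of the proof is identical.
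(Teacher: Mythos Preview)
Your proposal is correct and follows essentially the same approach as the paper: both arguments rest on the parity observation that the partial sum $P_j$ has the same parity as $j$, so under the hypothesis $|P_n|\le C-1$, whence any one-step extension stays within the bound. Your write-up is in fact more carefully phrased than the paper's (which writes $|\sum a_i|$ where the $\pm1$ partial sum is meant), but the mathematical content is identical.
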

\begin{proof}
It suffices to notice that $|\sum_{i=1}^j a_i|$ is odd if, and only if, $j$  is
odd. Thus, under the conditions of the proposition,  
$|\sum_{i=1}^j a_i|\leq C-1$, and the sequence can be extended arbitrarily.  
\end{proof}
\begin{proposition}\label{prop:symmetry}
There exists a $\pm1$ sequence $x_1,\dots, x_n$ of length $n$ and discrepancy
at most $C$ if, and only if, there exists a $\pm1$ sequence $y_1,\dots, y_n$ of
length $n$ and discrepancy at most $C$, in which $y_l = 1$, for some arbitrary
but fixed value of $l$.
\end{proposition}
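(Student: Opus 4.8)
The plan is to exploit the sign symmetry of discrepancy: multiplying every element of a $\pm1$ sequence by $-1$ does not change its discrepancy, since the discrepancy is a maximum of absolute values of partial sums of the form $\sum_{i=1}^{k} x_{i\cdot d}$, and $|{-t}| = |t|$ for every real $t$. Granting this, the ``if'' direction is immediate: a witnessing sequence $y_1,\dots,y_n$ is in particular a $\pm1$ sequence of length $n$ with discrepancy at most $C$, so it already establishes the left-hand side. All the content is therefore in the ``only if'' direction.

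For the ``only if'' direction I would take a $\pm1$ sequence $x_1,\dots,x_n$ of length $n$ and discrepancy at most $C$ and split on the value of $x_l$, where $1\le l\le n$ is the fixed index in the statement. If $x_l = 1$, set $y_i := x_i$ for all $i$ and there is nothing more to do. If $x_l = -1$, set $y_i := -x_i$ for all $1\le i\le n$; then $y_l = 1$, and by the sign-symmetry observation above the sequence $y_1,\dots,y_n$ has the same discrepancy as $x_1,\dots,x_n$, hence discrepancy at most $C$. In either case $y_1,\dots,y_n$ is the required sequence.

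The only point that needs care — and it is a routine one rather than a genuine obstacle — is the sign-symmetry claim itself, which should be spelled out against whichever form of the definition of ``discrepancy at most $C$'' is in use: for every $d$ and every $k$ with $k\cdot d\le n$ one has $\sum_{i=1}^{k} (-x_{i\cdot d}) = -\sum_{i=1}^{k} x_{i\cdot d}$, so each term $|\sum_{i=1}^{k} x_{i\cdot d}|$ is unaffected and therefore so is their maximum. Note that the fixed index $l$ plays no role beyond dictating which of the two sign variants, $x$ or $-x$, to return; in the SAT encoding this corresponds to adding the unit clause $(p_l)$, which halves the search space without affecting satisfiability.
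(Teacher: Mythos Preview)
Your proof is correct and follows exactly the same idea as the paper: the paper's own proof is a single sentence noting that the discrepancy of $x_1,\dots,x_n$ is bounded by $C$ if and only if the discrepancy of $-x_1,\dots,-x_n$ is, i.e.\ precisely the sign-symmetry you exploit. You have simply spelled out the case split on $x_l$ and the verification that absolute values of partial sums are preserved, which the paper leaves implicit.
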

\begin{proof}
The problem is symmetric, that is, the discrepancy of  $x_1,\dots, x_n$ 
is bounded by $C$ if, and only if, the discrepancy of $-x_1,\dots, -x_n$ is bounded by $C$.
\end{proof}

\subsection{SAT Encoding of  Multiplicativity}
Multiplicativity and complete multiplicativity of $\pm1$ sequences can be
encoded in SAT in a rather straightforward way. Assuming that a Boolean
sequence $p_1,\dots p_n$ encodes a $\pm1$ sequence $x_1,\dots,x_n$ so that the
logical value $1$  encodes the numerical value $+1$ and the logical value $0$
encodes the numerical value $-1$, a SAT encoding of the fact that 
$x_{j\cdot k} = x_j\cdot x_k$ 
is captured by the following clauses, which 
enumerate all four combinations of values of $x_j$ and $x_k$:
\begin{equation}\label{eq:mult2}
\textsf{prod}_{j,k} = \{
(\lnot p_{j}\lor \lnot p_{k}\lor p_{j\cdot k}),
(p_{j}\lor p_{k}\lor p_{j\cdot k}),
(\lnot p_{j}\lor  p_{k}\lor \lnot p_{j\cdot k}),
(p_{j}\lor \lnot p_{k}\lor \lnot p_{j\cdot k})\} 
\end{equation}
Then multiplicativity of $x_1,\dots x_n$ is captured by instances of (\ref{eq:mult2}) for
all coprime pairs $i$ and $j$; and, by (\ref{eq:multdef2}), complete multiplicativity
of the sequence $x_1,\dots, x_n$ is captured by instances of (\ref{eq:mult2}) for $j$
and $k$ such that every product $j\cdot k$ is generated only once.

For complete multiplicativity further optimisation is possible due to the fact
that in any such sequence $x_{j^2} = +1$ for any $j\in\natP$.  
It can be seen that the CNF formula $\cmult_i$  defined below 
expresses the complete multiplicativity condition on  $i$,
for every $i: 1\leq i \leq n$.
$$
\cmult_i = 
\left\{
\begin{array}{ll}
 \emptyset  & \textrm{ if $i$ is prime}\\[5pt]
  \{(p_i)\}  & \textrm{ if $i = j^2$, for some $j\geq 1$} \\[5pt]
%
  \mathsf{prod}_{j,k}\quad &\textrm{ if none of the cases above applies}\\
    & \textrm{ and $j,k$ are some non-trivial divisors of $i$.}
\end{array}
\right.
$$
Then we define two sets of clauses 

$$
\Mult(C,n) = \edp(C,n)\cup\mathop{\bigcup}_{{\substack{1< j\leq n, 1< k \leq n\\ j,k\textrm{ are coprime}\\j\cdot k \leq n}}} \textsf{prod}_{j,k}
$$
and
$$
\CMult(C,n) = \edp(C,n)\cup\bigcup_{i=1}^n \cmult_i.
$$
The following statement is a direct consequence of Theorem~\ref{th:discrepancy}.
\begin{theorem}\label{th:mult_discrepancy}
For any assignment $I:\{p_1,\dots,p_n\}\to\{0,1\}$ the following holds: 
there exists 
an extension of $I$ to $I': \vars(\Mult(C,n))\to\{0,1\}$
(or an extension of $I$ to $I': \vars(\CMult(C,n))\to\{0,1\}$),
which is 
a model of $\Mult(C,n)$ (or $\CMult(C,n)$, respectively) if, and only if, 
$I(p_1),\dots, I(p_n)$ encodes a multiplicative (or completely multiplicative, respectively)
$\pm1$ sequence $x_1,\dots, x_n$ of length $n$ and discrepancy at most $C$.
\end{theorem}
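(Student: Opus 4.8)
The plan is to reduce Theorem~\ref{th:mult_discrepancy} to Theorem~\ref{th:discrepancy} by showing that the extra clauses added to $\edp(C,n)$ precisely capture the (complete) multiplicativity condition, and that they can always be satisfied by an appropriate choice of the auxiliary propositions without constraining the input propositions beyond multiplicativity. First I would fix the direction: given $I:\{p_1,\dots,p_n\}\to\{0,1\}$, note that $I$ extends to a model of $\Mult(C,n)$ if and only if $I$ extends to a model of $\edp(C,n)$ \emph{and} $I$ satisfies every clause set $\textsf{prod}_{j,k}$ for coprime $j,k$ with $jk\le n$ (similarly with $\cmult_i$ for the completely multiplicative case); this is because the $\textsf{prod}_{j,k}$ and $\cmult_i$ clauses only mention input propositions $p_1,\dots,p_n$ and share no auxiliary variables with $\edp(C,n)$, so the two satisfiability conditions are independent and can be combined conjunctively. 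Theorem~\ref{th:discrepancy} already tells us the first conjunct is equivalent to ``$I(p_1),\dots,I(p_n)$ encodes a $\pm1$ sequence of discrepancy at most $C$,'' so it remains to show the second conjunct is equivalent to multiplicativity (resp.\ complete multiplicativity) of that sequence.

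For the $\textsf{prod}_{j,k}$ part, the key observation is a small Boolean identity: under the encoding $x_i = 2p_i-1$, the four clauses in~(\ref{eq:mult2}) are exactly the CNF of the biconditional $p_{jk}\leftrightarrow(p_j\leftrightarrow p_k)$, which holds under $I$ iff $x_{jk}=x_j\cdot x_k$. Hence $I$ satisfies all of $\bigcup\textsf{prod}_{j,k}$ over coprime pairs iff $x_{mn}=x_m x_n$ for all coprime $m,n$ with $mn\le n$ — i.e., iff the sequence is multiplicative. The only mild subtlety is to check that requiring~(\ref{eq:multdef1}) only for coprime pairs $j,k$ with \emph{both} $j,k>1$ and $jk\le n$ is genuinely the full multiplicativity condition on a length-$n$ sequence: the cases where one factor is $1$ are vacuous (they assert $x_m = x_1\cdot x_m$, which forces $x_1=+1$; one should confirm this is either already implied or harmless — in fact $x_1$ is unconstrained by multiplicativity alone, so I would note that the definition in the Preliminaries implicitly allows $x_1=\pm1$ and the encoding matches that). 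For the completely multiplicative case I would instead invoke~(\ref{eq:multdef2}): complete multiplicativity is equivalent to ``$x_1=+1$ and for every composite $m$, $x_m = x_i x_j$ for \emph{some} non-trivial divisors $i,j$ of $m$,'' and since we may \emph{choose} one witnessing factorisation per composite $m$, the single clause set $\textsf{prod}_{j,k}$ per composite suffices; the perfect-square optimisation ($x_{j^2}=+1$, encoded by the unit clause $(p_i)$) is justified because in any completely multiplicative sequence $x_{j^2}=x_j^2=+1$, and conversely $x_i=+1$ together with the chosen factorisations for the remaining composites reconstructs a genuine completely multiplicative sequence via the canonical-representation formula $x_m=\prod(x_{p_\ell})^{\alpha_\ell}$.

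The main obstacle, such as it is, is not deep but bookkeeping: verifying that the ``some divisors'' formulation~(\ref{eq:multdef2}) really is equivalent to the global multiplicativity law~(\ref{eq:multdef1}) restricted to composites, so that picking one factorisation per composite in the definition of $\cmult_i$ loses nothing. I would handle this by a short induction on $m$ (ordered by value, or by number of prime factors with multiplicity): if $x_m = x_i x_j$ holds for the chosen non-trivial divisors $i,j$ of $m$, and $x$ agrees with a completely multiplicative sequence on all smaller indices (in particular on $i$ and $j$), then $x_m$ takes the forced value $\prod(x_{p_\ell})^{\alpha_\ell}$; the base cases are primes (unconstrained, as they should be) and perfect-square primes' powers handled by the $(p_i)$ clauses. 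With that lemma in hand, together with the Boolean identity for $\textsf{prod}_{j,k}$ and the independence-of-auxiliary-variables remark, Theorem~\ref{th:mult_discrepancy} follows by directly chaining the equivalences through Theorem~\ref{th:discrepancy}.
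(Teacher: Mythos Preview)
Your proposal is correct and follows the paper's approach: the paper simply declares the theorem ``a direct consequence of Theorem~\ref{th:discrepancy},'' and your argument spells out exactly that reduction by observing that the clauses in $\textsf{prod}_{j,k}$ and $\cmult_i$ involve only the input propositions $p_1,\dots,p_n$, so satisfiability of $\Mult(C,n)$ (resp.\ $\CMult(C,n)$) factors as satisfiability of $\edp(C,n)$ conjoined with the multiplicativity constraints on $I$. Your careful bookkeeping around~(\ref{eq:multdef2}) and the $x_1$ issue goes beyond what the paper supplies, but this is filling in omitted detail rather than taking a different route.
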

Finally we notice that the completely multiplicative case 
can be optimised based on the following observation.
\begin{proposition}\label{prop:cmult_opt}
The discrepancy of a completely multiplicative $\pm1$ sequence $x_1,\dots, x_n$
is bounded by $C$, for some $C>0$, if, and only if,  $x_1,\dots, x_n$ is
$C$-bounded.
\end{proposition}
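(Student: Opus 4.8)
The plan is to prove the two implications separately; the statement is essentially an immediate consequence of complete multiplicativity, so there is little real difficulty, and most of the effort goes into matching up definitions.

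For the forward direction, suppose $x_1,\dots,x_n$ has discrepancy at most $C$. Recall the characterisation noted just before (\ref{eq:edp}): this means that for every $d$ with $1\le d\le\lfloor n/(C+1)\rfloor$ the subsequence $x_d,x_{2d},\dots,x_{\lfloor n/d\rfloor\cdot d}$ is $C$-bounded. In particular, taking $d=1$, the sequence $x_1,\dots,x_n$ itself is $C$-bounded. This direction uses nothing about multiplicativity; it holds for an arbitrary $\pm1$ sequence.

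For the converse, suppose $x_1,\dots,x_n$ is $C$-bounded, i.e. $|\sum_{i=1}^{j}x_i|\le C$ for all $1\le j\le n$. Fix any $d\ge 1$ and any $k$ with $k\cdot d\le n$; I must show $|\sum_{i=1}^{k}x_{i\cdot d}|\le C$. Since $x$ is completely multiplicative, $x_{i\cdot d}=x_i\cdot x_d$ for each $i$, hence $\sum_{i=1}^{k}x_{i\cdot d}=x_d\cdot\sum_{i=1}^{k}x_i$. As $x_d\in\{+1,-1\}$, this gives $|\sum_{i=1}^{k}x_{i\cdot d}|=|\sum_{i=1}^{k}x_i|$, and since $k\le k\cdot d\le n$ the quantity $\sum_{i=1}^{k}x_i$ is a genuine prefix sum of the $C$-bounded sequence, hence of absolute value at most $C$. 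As $d$ and $k$ were arbitrary, every homogeneous arithmetic progression sum inside $x_1,\dots,x_n$ is bounded by $C$, i.e. the discrepancy of $x_1,\dots,x_n$ is at most $C$.

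The only points requiring attention are bookkeeping ones: correctly unpacking ``discrepancy at most $C$'' via the AP-subsequence characterisation recalled above, and noting the trivial bound $k\le k\cdot d\le n$ that lets $\sum_{i=1}^{k}x_i$ be treated as a prefix of the original sequence. There is no genuine obstacle here — complete multiplicativity does all the work, collapsing the whole family of arithmetic-progression sums into the single family of prefix sums up to the global sign $x_d$.
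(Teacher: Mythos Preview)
Your proof is correct and follows essentially the same route as the paper's: the forward direction is immediate (take $d=1$), and for the converse you use complete multiplicativity to factor out $x_d$ from $\sum_{i=1}^{k} x_{i\cdot d}$, reducing each homogeneous-progression partial sum to an ordinary prefix sum. The paper's argument is identical in substance, only more tersely stated.
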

\begin{proof}
The necessary condition is trivial by definition of discrepancy. For the sufficient
condition we show that for any $C$-bounded sequence $x_1,\dots,x_n$ and any $d > 1$ 
the subsequence $x_{d}, x_{2d},\dots, x_{\lfloor n/d\rfloor \cdot d}$ is $C$-bounded.
Let $1\leq j \leq {\lfloor n/d\rfloor}$. Then
$
|\sum_{i=1}^j x_{i\cdot d}| = 
|\sum_{i=1}^j (x_{i}\cdot x_{d}) | = 
|x_d\cdot\sum_{i=1}^j x_{i} | = 
|\sum_{i=1}^j x_{i} | \leq C$.
\end{proof}
\section{Results}\label{sec:experiments}
In our experiments we use \Treengeling, a parallel cube-and-conquer flavour of the
\Lingeling SAT solver~\cite{lingeling}
version {aqw}, the winner of the \emph{SAT-UNSAT} category of the SAT'13
competition~\cite{sat13}, and the \Glucose solver~\cite{Glucose} version
3.0, the winner of the \emph{certified UNSAT} category of the SAT'13
competition~\cite{sat13}.  All experiments were conducted on PCs equipped with
an Intel Core i5-2500K CPU running at 3.30GHz and 16GB of RAM.

In our first series of experiments we investigate the discrepancy of
unrestricted $\pm1$ sequences.  We encode\footnote{The problem generator and results can be found at~\url{http://www.csc.liv.ac.uk/~konev/edp/}} the existence of a $\pm1$ discrepancy
$C$ sequence of length $n$ into SAT as described in
Section~\ref{sec:encoding}. We deploy both optimisations described in
Proposition~\ref{prop:oddeven} and Proposition~\ref{prop:symmetry}. 
We choose as $l$, for which we fix $x_l$ to be $+1$, colossally abundant numbers~\cite{abundant},
which have many divisors and thus contribute to many homogeneous sequences.
Specifically for $C=2$, the choice of
$l=120$ is more beneficial for satisfiable instances; however,
$l={60}$ results in a better reduction of the size of the unsatisfiability proof described below.
For consistency of presentation, we use $l=60$ in all our experiments for
$C=2$.

For $C=2$ we establish that the maximal length of a $\pm 1$
sequence of discrepancy $2$ is $1160$. 
The CNF formula $\edp(2, 1160)$ contains $11824$ propositions and  $41\,884$ clauses.
It takes the \Treengeling system about 
${430}$~seconds to find an example of such a sequence 
on our hardware configuration.  
One of the  sequences of length $1160$ of discrepancy
$2$ can be found in Appendix~\ref{sec:sequence}.

When  applied to the CNF formula $\edp(2, 1161)$, which contains $11847$
propositions and  $41\,970$ clauses, \Treengeling reports unsatisfiability. 
In order to corroborate this statement, we also use \Glucose.  It takes the
solver about ${800}$~seconds to  generate a DRUP certificate of
unsatisfiability.  The correctness of the generated unsatisfiability
certificate has been independently verified with the \druptrim
tool~\cite{druptrim}. 

The size of the certificate is about $\textrm{1.67}$~GB,
and the time needed to verify the certificate is comparable with the time
needed to generate it.  The RUP unsatisfiability certificate, that is the DRUP
certificate with all information on the deleted clauses stripped, is $850.2$MB;
it takes \druptrim about five and a half hours to verify it.  Combined with
Theorem~\ref{th:discrepancy},  these two experiments yield  a computer proof of
the following statement.
\begin{theorem}\label{th:edp2}
The length of a maximal  $\pm1$ sequence of discrepancy $2$ is $1160$.
\end{theorem}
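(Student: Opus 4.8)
The plan is to combine Theorem~\ref{th:discrepancy} with two computational facts about the propositional encoding, obtained by running SAT solvers on the formulas $\edp(2,1160)$ and $\edp(2,1161)$. By Theorem~\ref{th:discrepancy}, for each $n$ there exists a $\pm1$ sequence of length $n$ and discrepancy at most $2$ if, and only if, the formula $\edp(2,n)$ is satisfiable; hence it suffices to establish that $\edp(2,1160)$ is satisfiable while $\edp(2,1161)$ is unsatisfiable.

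First I would argue that the length is at least $1160$. For this it is enough to exhibit a single $\pm1$ sequence of length $1160$ and discrepancy $2$, equivalently, a satisfying assignment of $\edp(2,1160)$; such a sequence (found by \Treengeling) is recorded in Appendix~\ref{sec:sequence}. Verifying that the exhibited sequence has discrepancy $2$ is purely mechanical: one checks $|\sum_{i=1}^{\lfloor n/d\rfloor} x_{id}|\le 2$ for every $d$, or equivalently that the assignment satisfies every clause of $\edp(2,1160)$. I would apply Propositions~\ref{prop:oddeven} and~\ref{prop:symmetry} only as optimisations on the search side; they are not needed for the correctness of the lower bound, since Theorem~\ref{th:discrepancy} already ties satisfiability of $\edp(2,n)$ directly to the existence of a discrepancy-$2$ sequence.

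Second I would argue that the length is at most $1160$, i.e.\ that $\edp(2,1161)$ is unsatisfiable; by Theorem~\ref{th:discrepancy} this shows no $\pm1$ sequence of length $1161$ (and, by restriction, none of length greater than $1161$) has discrepancy at most $2$. This is the heart of the argument and the main obstacle: unsatisfiability cannot be witnessed by a single assignment, so one must certify it. The plan is to have \Glucose produce a DRUP certificate for $\edp(2,1161)$ and then independently check it with \druptrim, relying on the soundness of RUP/DRUP proofs recalled in the preliminaries (every clause in the certificate is a reverse-unit-propagation consequence of the preceding clauses, and the certificate ends with the empty clause). The challenge is entirely computational — the certificate is on the order of a gigabyte and verification takes hours — rather than conceptual; once the certificate is verified, unsatisfiability of $\edp(2,1161)$ follows, and hence the claimed maximal length of $1160$.

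Finally I would assemble the two halves: $\edp(2,1160)$ satisfiable gives a discrepancy-$2$ sequence of length $1160$, and $\edp(2,1161)$ unsatisfiable gives that no longer sequence exists, so $1160$ is exactly the maximal length. The only ``trust'' assumptions are the correctness of Theorem~\ref{th:discrepancy} (already proved), the soundness of DRUP proof checking, and the correctness of the \druptrim verifier; the SAT solvers themselves need not be trusted, since the UNSAT result is independently certified and the SAT result is independently checked against the clause set.
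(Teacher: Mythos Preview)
Your proposal is correct and mirrors the paper's own argument essentially step for step: reduce via Theorem~\ref{th:discrepancy} to satisfiability of $\edp(2,1160)$ and unsatisfiability of $\edp(2,1161)$, exhibit the satisfying assignment found by \Treengeling\ (Appendix~\ref{sec:sequence}) for the lower bound, and certify the upper bound with a \Glucose-generated DRUP proof independently checked by \druptrim. Your remarks about Propositions~\ref{prop:oddeven} and~\ref{prop:symmetry} being search-side optimisations and about the trust base (DRUP soundness and the verifier, not the solvers) are also in line with the paper.
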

Thus we prove that the Erd\H{o}s discrepancy conjecture holds true for $C=2$.

\smallskip

When applied to $\edp(3,n)$ for increasing values of $n$ our method could only
produce sequences of discrepancy $3$ of length in the region of $14\,000$, even
though solvers were allowed to run for weeks.  Since both  multiplicativity
and complete multiplicativity restrictions reduce severely the search space, in
hope for better performance, we perform the second series of experiments to
investigate the discrepancy bound for multiplicative and
completely multiplicative sequences. Notice that the optimisation
described in Proposition~\ref{prop:symmetry} is not applicable in this case as the
fact that $x_1,\dots, x_n$ is multiplicative does not imply that $-x_1,\dots,
-x_n$ is.

We saw in Example~\ref{ex:edp1} that multiplicative sequences of discrepancy
$1$ are longer than completely multiplicative sequences. The longest completely
multiplicative sequence of discrepancy $2$ is known to contain $246$
elements~\cite{Polymath2}; tests with $\Mult$ show that 
the longest multiplicative sequence of discrepancy $2$ has $344$ elements.
Thus it wouldn't be unreasonable to assume that 
the longest multiplicative discrepancy $3$ sequence is longer
than the longest completely multiplicative one, but is probably harder to find.
It turns out that this expectation is wrong on both accounts.

We establish that the length of a maximal $\pm 1$ completely multiplicative
discrepancy $3$ sequence coincides with the length of a maximal $\pm1$
multiplicative discrepancy $3$ sequence and is equal to $127\,645$. 
It takes \Treengeling about one~hour and fifty minutes to find
a satisfying assignment to $\CMult(3, 127\, 645)$, 
which contains $3\,484\,084$ propositions and $13\,759\,785$ clauses,
and about one hour and thirty five minutes to find a satisfying 
assignment to $\Mult(3, 127\, 645)$, 
which also contains $3\,484\,084$ propositions but $14\,813\,052$ clauses.

It takes the \Glucose solver just under eight hours to  
generate an approximately 1.95~GB DRUP  proof of 
unsatisfiability for $\CMult(3, 127\,646)$, which contains $3\,484\,084$ propositions
and $13\,759\,809$ clauses,
and about nine and a half hours to generate an approximately
3.78~GB DRUP proof of unsatisfiability for 
$\Mult(3, 127\,646)$,
which again contains the same number of propositions but $14\,813\,076$ clauses.
%

The optimisation of Proposition~\ref{prop:cmult_opt} leads
reduction in the problem size for the completely multiplicative case 
(446\,753~propositions and 1\,738\,125~clauses for length
127\,645 and 446\,759~propositions and 1\,738\,149~clauses for
length 127\,646) and a significant reduction in the 
\Treengeling running time (about 20 and 30 minutes, respectively); however, it 
does not 
reduce the size of the DRUP certificate for the 
unsatisfiable case, which is about 2.22~GB. This lack of reduction in
size is due to the fact that unit propagation steps are not recorded 
as part of a DRUP certificate.

So we get a computer-aided proof of another sharp bound on the sizes of maximal
sequences of bounded discrepancy.
\begin{theorem}\label{th:edp3mult}
The length of a maximal multiplicative 
$\pm1$ sequence of discrepancy $3$  equals to 
the length of a maximal completely multiplicative 
$\pm1$ sequence of discrepancy $3$  and is 
$127\,645$.
\end{theorem}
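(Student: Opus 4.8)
The plan is to establish Theorem~\ref{th:edp3mult} as a purely computational corollary of Theorem~\ref{th:mult_discrepancy} (together with Proposition~\ref{prop:cmult_opt} for the completely multiplicative optimisation), by exhibiting the four SAT-solver runs that pin down the two sharp bounds. Concretely, I would split the claim into two halves: (a) there exist a multiplicative and a completely multiplicative $\pm1$ sequence of discrepancy $3$ and length $127\,645$; (b) there is no multiplicative (hence no completely multiplicative) $\pm1$ sequence of discrepancy $3$ of length $127\,646$. By Theorem~\ref{th:mult_discrepancy}, part (a) is equivalent to the satisfiability of $\Mult(3,127\,645)$ and of $\CMult(3,127\,645)$, and part (b) is equivalent to the unsatisfiability of $\Mult(3,127\,646)$ and $\CMult(3,127\,646)$. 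So the whole theorem reduces to four decidable instances.

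First I would run \Treengeling on $\Mult(3,127\,645)$ and on $\CMult(3,127\,645)$ and record the returned satisfying assignments; evaluating these assignments on the input propositions $p_1,\dots,p_{127\,645}$ and decoding via $x_i = 2p_i-1$ yields the witnessing sequences, and their correctness as discrepancy-$3$ multiplicative (resp. completely multiplicative) sequences is guaranteed by Theorem~\ref{th:mult_discrepancy} — though one may also re-check the $C$-boundedness and the multiplicativity clauses directly, which is a trivial linear-time verification. For the completely multiplicative side one can instead use the smaller encoding justified by Proposition~\ref{prop:cmult_opt}, replacing the full $\edp(3,n)$ conjunct by $\CBound^3(p_1,\dots,p_n)$ alone; this does not change the set of decoded sequences because Proposition~\ref{prop:cmult_opt} shows $C$-boundedness and discrepancy-$\le C$ coincide for completely multiplicative sequences. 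Then I would run \Glucose on $\Mult(3,127\,646)$ and $\CMult(3,127\,646)$, obtain DRUP unsatisfiability certificates, and have \druptrim independently verify them; since a DRUP certificate strips to a RUP certificate and every unsatisfiable CNF has one (as recalled in the preliminaries), a verified certificate is a genuine proof of unsatisfiability, which by Theorem~\ref{th:mult_discrepancy} means no such sequence of length $127\,646$ exists. Because $C$-boundedness (and discrepancy) is monotone under taking prefixes, unsatisfiability at length $127\,646$ immediately gives unsatisfiability at every larger length, so $127\,645$ is maximal in both cases; combined with the two satisfying assignments, equality of the two maxima follows.

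The main obstacle is not mathematical but computational: the length $127\,646$ instances have over three million propositions and well over ten million clauses, and producing a DRUP proof that a human-checkable tool will certify is at the edge of what current solvers and proof checkers can do. Two points require care. One is that the certificate must actually be \emph{verified} by an independent checker (\druptrim) rather than merely produced, since the theorem is only as trustworthy as that verification; here the new sequential-counter encoding of Section~\ref{sec:encoding} is what makes the proof small enough (on the order of a few gigabytes) to be checkable. The other is the asymmetry noted in the surrounding text: Proposition~\ref{prop:symmetry} is unavailable for multiplicative sequences, so no symmetry-breaking clause can be added to $\Mult(3,127\,646)$, which both enlarges the search and is the reason the multiplicative DRUP proof is the larger of the two; one must simply let the solver run (about nine and a half hours) and accept the larger ($\approx 3.78$~GB) certificate. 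Finally, I would double-check the bookkeeping that all four instances are generated with matching parameters — same $n$, same $C=3$, same shared input propositions $p_1,\dots,p_n$ with distinct auxiliary $\csf^k_j$ across the conjuncts of $\edp$ — so that the equivalences of Theorem~\ref{th:mult_discrepancy} apply verbatim; this is routine but is exactly the sort of thing whose failure would silently invalidate the conclusion.
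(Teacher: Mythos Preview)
Your proposal is correct and follows essentially the same route as the paper: reduce the theorem, via Theorem~\ref{th:mult_discrepancy} (and Proposition~\ref{prop:cmult_opt} for the optimised completely multiplicative encoding), to the four SAT instances $\Mult(3,127\,645)$, $\CMult(3,127\,645)$, $\Mult(3,127\,646)$, $\CMult(3,127\,646)$, obtain satisfying assignments with \Treengeling for the first two and DRUP unsatisfiability certificates with \Glucose for the last two, and note that Proposition~\ref{prop:symmetry} is inapplicable here. The only minor addition in your write-up is the explicit prefix-monotonicity remark extending unsatisfiability to all larger lengths, and the (logically sound) observation that unsatisfiability of $\Mult(3,127\,646)$ already implies that of $\CMult(3,127\,646)$; the paper simply reports all four runs without drawing that shortcut.
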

Unrestricted sequences of discrepancy $3$ can still be longer than $127\,646$: by requiring that
only first $127\,600$ elements of a sequence are completely multiplicative, we
generate a 130,000 long EDP3 sequence in about one hour and fifty minutes
thus establishing a slightly better lower bound on the length of 
$\pm1$ sequences of discrepancy $3$ than the one from Theorem~\ref{th:edp3mult}. 
The solvers struggle to expand it much further. Notice that the optimisation of Proposition~\ref{prop:cmult_opt} is not applicable here. 

We summarise known facts about discrepancy of unrestricted, multiplicative and
completely multiplicative sequences in Table~\ref{tab:results}. We highlight in boldface 
cases where the lengths of maximal sequences of different kinds are equal.
\begin{table}[t]\centering
{
\newcommand{\ra}[1]{\renewcommand{\arraystretch}{#1}}
\begin{tabular}{@{}lllll@{}}
\toprule
Discrepancy & & Completely          & Multiplicative    & Unconstrained\\
bound       & & multiplicative      &                   &              \\
\midrule                                                        
C = 1       & & 9                   & \textbf{11}       & \textbf{11}           \\
C = 2       & & 246                 & 344               & 1160         \\ 
C = 3       & & \textbf{127\,645}   & \textbf{127\,645} & $>$130\,000\\ 
\bottomrule
\end{tabular}
}
\caption{Maximal length of $\pm1$ sequences of bounded discrepancy\label{tab:results}}
\end{table}

\section{Conclusions}\label{sec:conclusion}
We have demonstrated that SAT-based methods can be used to tackle the
longstanding mathematical questions related to discrepancy of $\pm 1$ sequences. 
Not only were we able to identify the exact boundary between satisfiability and
unsatisfiability of the encoding of the EDP for $C=2$, thus identifying the
longest sequences of discrepancy $2$, but also we have established the
surprising fact that the lengths of the longest multiplicative and completely
multiplicative sequences of discrepancy $3$ coincide. 
The latter result helps to establish a novel lower bound on the length of the longest
discrepancy $3$ sequence. 

There is, however, a noticeable asymmetry in our findings.  The fact that a
sequence of length 1160 has discrepancy $2$ can be relatively easily checked
manually.  It is harder but not impossible to verify the correctness of the
discrepancy bound for $127\, 645$-long sequences.
On the other hand, even though improvements to our method shortened the
Wikipedia-size 13~GB proof reported in~\cite{KLArx14} 
more than 
tenfold passing
thus the psychological barrier of $1$~GB, it is still probably one of the
longest proofs of a non-trivial  mathematical result, and it is equally
improbable that a mathematician would verify by hand ten billions or half a
billion of automatically generated proof lines. 
It should be noted that this gigantic proof is a formal 
proof in a well-specified proof system and, as such, it has been verified by 
a third-party tool, and it can potentially be translated into a proof assistant such as Coq~\cite{coq}.
The reduction of proof size
will be useful for any future analysis of the proof in an attempt to
identify patterns and lemmas and produce a compact proof more amenable for
human comprehension. 

\paragraph{Acknowledgements}
The authors would like to thank
Armin Biere,
Marijn Heule,
Pascal Fontaine,
Laurent Simon and
Laurent Th\'ery
and 
Donald Knuth
for helpful  discussions, comments and ideas following the publication of the
preliminary version of this paper~\cite{KLArx14}.

\appendix

\section{Proofs of technical results}\label{sec:proofs}
In this section we give proofs of the technical results used in the main text.
We re-state propositions and theorems here for the reader's convenience.

\smallskip 

\textsc{Proposition~~\ref{prop:sinz}. }
{\it
Let $\Sinz(p_1,\dots,p_n)$ be as defined above.  Then 
\begin{enumerate}
\item[(i)] For any assignment $I:\vars(\Sinz(p_1,\dots,p_n))\to\{0,1\}$ such that $I$ satisfies $ \Sinz(p_1,\dots,p_n)$,
any $1\leq j\leq n$ and $1\leq k\leq n$
we have 
$$
    I(\csf^k_j) = 1\quad \textrm{if, and only if,}\quad \quad\sum_{i=1}^j I(p_i) \geq k.
$$
\item[(ii)] For any $0/1$-sequence  $(a_1,\dots, a_n)\in\{0,1\}^n$ 
 there exists an assignment $I:\vars(\Sinz(p_1,\dots,p_n))\to \{0,1\}$ such that 
  $I(p_i) = a_i$, for $1\leq i \leq n$;
  $I$ satisfies $\Sinz(p_1,\dots,p_n)$; and 
  for any $r\leq n$ and $j\leq n$ if $\sum_{i=1}^j a_i \leq r$ then $I(\csf^k_j) = 0$, for $r<k\leq n$.
\end{enumerate}
}
\begin{proof}
\begin{enumerate}
\item[(i)] The proof proceeds by induction on the lexicographical partial 
order $\prec$ on pairs of non-negative integers:
$(j,k) \prec (j',k')$  iff $j < j' \lor ((j = j') \land (k < k'))$. 
Fix some $n \ge 1 $.  

Consider cases: 

\begin{itemize}
\item   Suppose that  { $j=k=1$}. 
Then formula~(\ref{eq:sinz_card_1}), one
of the conjuncts of $\Phi^{n}(p_{1}, \ldots, p_{n})$, instantiates to
$s_{1}^{1} \leftrightarrow (s_{0}^{1} \lor (s_{0}^{0} \land p_{1})$. Therefore,
for an assignment $I$ such that 
$I $ satisfies $ \Phi^{n}(p_{1}, \ldots, p_{n})$
we have 
$I(s_{1}^{1} \leftrightarrow (s_{0}^{1} \lor (s_{0}^{0} \land p_{1})) = 1$.  
Furthermore, for the satisfying assignment we have
$I(s_{0}^{1}) = 0$ 
and
$I(s_{0}^{0}) = 1$. 
It follows then 
that
$I(s_{1}^{1}) = I(p_{1})$, 
which is equivalent to the statement of the proposition for
the case $k=j=1$.

\item Suppose that {\ $j=1$ and $k > 1$.} 
For a satisfying assignment $I$ we have $I(s_{1}^{k}) = 0$ 
(as (\ref{eq:sinz_card_2}) is a conjunct of $\Phi^{n}(p_{1}, \ldots, p_{n})$). 
On the other hand for $k > 1$ we have $\sum_{i=1}^1 I(p_i) <  k$.
Thus the statement of the proposition holds true in this case. 

\item  Suppose that 
{$j > 1$, $k \ge 1$.} For a satisfying assignment $I$ we
have $I(s_{j}^{k}) = 1$ if and only if  $I(s_{j-1}^{k}) =1$  or
$I(s_{j-1}^{k-1} \land p_{j}) = 1$ (by satisfaction of (\ref{eq:sinz_card_1})). By induction
hypothesis the later is equivalent to $\sum_{i=1}^{j-1} I(p_i) \geq k$ or
$\sum_{i=1}^{j-1} I(p_i) \geq k-1$ and $I(p_{j})=1$, which in turn is
equivalent to   $\sum_{i=1}^{j} I(p_i) \geq k$. 
\end{itemize}

\item[(ii)] First notice that any assignment $I_{p}: \{p_{1}, \ldots, p_{n})
\rightarrow \{0,1\}$ can be extended  in a unique way to the assignment
$I:\vars(\Sinz(p_1,\dots,p_n))\to \{0,1\}$. 
Indeed, satisfaction of (\ref{eq:sinz_card_2}) and (\ref{eq:sinz_card_3}) defines
uniquely the values of satisfying assignment $I$ on $s_{j}^{k}$ for the cases
$0\le j < k \le n$ and $k=0; 0 \le j \le n$, respectively.  Further, using
satisfaction condition for (\ref{eq:sinz_card_1}) the values of $I$ on the remaining  variables
$s_{j}^{k}$ with $1 \le k \le n, 1 \le j \le n$  are defined uniquely by
induction on $\prec$.  The remaining condition, that is for any $r\leq n$ and $j\leq n$ if
$\sum_{i=1}^j a_i \leq r$ then $I(\csf^k_j) = 0$, for $r<k\leq n$, now follows
from item~(\textit{i}) above.
\end{enumerate}
\end{proof}


\smallskip 

\textsc{Theorem~~\ref{th:cbounded}. }
{\it
For any assignment $I:\{p_1,\dots,p_n\}\to\{0,1\}$ the following holds: 
there exists an extension of $I$ to $I': \vars(\Eq^{C}(p_1,\dots,p_n))\to\{0,1\}$ that is 
a model of $\Eq^{C}(p_1,\dots,p_n)$ if, and only if, 
the sequence $I(p_1),\dots, I(p_n)$ is $C$-bounded.
}
\begin{proof}
$\Longleftarrow$
Assume that for an assignment $I: \{p_{1}, \ldots, p_{n}\} \rightarrow \{0,1\}$ 
the sequence $I(p_{1}), \ldots, I(p_{n})$ is $C$-bounded. 
By item (\textit{ii}) of Proposition~\ref{prop:sinz}, $I$
can be extended to an assignment $I': vars(\Phi(p_{1}, \ldots, p_{n}) \rightarrow \{0,1\}$ 
such that 
$I'(p_{i}) = I(p_{i})$, for all $1 \le i \le n$; 
$I'$ satisfies $\Phi(p_{1}, \ldots , p_{n})$; 
and, furthermore, $I'(s_{j}^{k}) = 1$  if, and only if, $\sum_{i=1}^{j} I'(p_{i})  \ge k$. 
We show that $I'$ is, in fact,  a model of $\Psi^{C}(p_{1}, \ldots, p_{n})$. It suffices 
to demonstrate that clauses
(\ref{eq:equiv_clause_8}) and 
(\ref{eq:equiv_clause_6}) are true under $I'$.
\begin{enumerate}
\item
Assume to the contrary that $I'(s_{j}^{k}) = 0$,  for some $1 \le k \le n$ and $2k-2+C < j \le n$. 
Then by Proposition~\ref{prop:sinz} we have 
\begin{equation}\label{eq:proof2}
    \sum_{i=1}^{j}I'(p_{i}) < k.
\end{equation}
As the number of occurrences of $1$ in  $I'(p_{1}), \ldots,I'( p_{n})$ is 
$\sum_{i=1}^{j} I'(p_{i})$
(and, hence,
the number of occurrences of $0$ in  $I'(p_{1}), \ldots,I'( p_{n})$ is 
$\sum_{i=1}^{j} (1-I'(p_{i}))$), 
the difference  between the number of occurrences of $0$ and the number of occurrences  of $1$ is
at least $\sum_{i=1}^{j} (1-I'(p_{i})) - \sum_{i=1}^{j} I'(p_{i}) = j - 2\sum_{i=1}^{j} I'(p_{i})$.

As we supposed that $j > 2k-2+C$, the difference  between the number of
occurrences of $0$ and the number of occurrences  of $1$ exceeds $ 2k -2 + C -
2\sum_{i=1}^{j} I'(p_{i}) \geq  2k -2  + C - 2(k-1) = C$ (by  (\ref{eq:proof2}))
and therefore  the sequence
$I'(p_{1}), \ldots, I'(p_{n})$ and hence $I(p_{1}), \ldots, I(p_{n})$ is not
$C$-bounded contradicting our assumption.

\item  Assume to the contrary that $I'(s_{j}^{k}) = 1$  for some $1 \le k \le n$ and $0\le j < 2k-C$.  
Then $\sum_{i=1}^{j}I'(p_{i}) \ge  k$  (by Proposition ~\ref{prop:sinz})
Now we estimate the difference between the number of occurrences 
of $1$ in $I'(p_{1}) \ldots I'(p_{n})$ and the number
of occurrences of $0$ in $I'(p_{1}) \ldots I'(p_{n})$. 
We have 

$$
\sum_{i=1}^{j}I'(p_{i}) - \sum_{i=1}^{j}(1-I'(p_{i})) = 2( \sum_{i=1}^{j}I'(p_{i})) - j \ge 2k - j > 2k- (2k-C) = C.
$$ 

It follows  that  the sequence $I(p_{1}), \ldots, I(p_{n})$ is not $C$-bounded
contradicting our assumption. 

\end{enumerate}


$\Longrightarrow$ Consider an assignment 
$I : \{p_{1}, \ldots, p_{n}\} \rightarrow \{0,1\}$ and assume that its
extension to $I': vars(\Psi^{C}(p_{1}, \ldots, p_{n})) \rightarrow \{0,1\}$
is a model of  $\Psi^{C}(p_{1}, \ldots, p_{n})$. Now we are to show that
$I(p_{i}), \ldots, I(p_{n})$ is $C$-bounded.  Assume to the contrary that it is
not $C$-bounded. As $I'$ is an extension of $I$ we have that 
$I'(p_{i}), \ldots, I'(p_{n})$ is also not $C$-bounded. 
Then we have $\sum_{i=1}^{j} I'(p_{j}) > C$  for some $j$. Consider two cases: 

\begin{enumerate}
\item The number of occurrences of $1$  in $I'(p_{i})$, for $1 \le i \le j$,
exceeds the number of occurrences of $0$ by more than $C$.  
Let $\sum_{i=1}^{j} I'(p_{i}) = k$. Then we have 
$$
\sum_{i=1}^{j} I'(p_{i}) - \sum_{i=1}^{j}(1 - I'(p_{i})) > C,
$$ that is $ k-j+k > C$ and $j < 2k-C$.  Now we have $I'(s_{j}^{k})=0$, by the satisfaction condition for (\ref{eq:equiv_clause_8}),
and $I'(s_{j}^{k}) = 1$, by  $\sum_{i=1}^{j} I'(p_{i}) = k$. A contradiction.

\item The number of occurrences of $0$ in $I'(p_{i})$, for  $1 \le i \le j$,
exceeds the number of occurrences of $1$ by more than $C$. 
Let $\sum_{i=1}^{j} I'(p_{i}) = k-1$.  Then we have 
$$
    \sum_{i=1}^{j}(1- I'(p_{i})) - \sum_{i=1}^{j}I'(p_{i}) > C,
$$ 
that is, $j - 2(k-1) > C$ and $j-2k+2 > C$.  
Then, by the satisfaction condition for  (\ref{eq:equiv_clause_6}), we have 
$I'(s_{j}^{k}) = 1$   and, $\sum_{i=1}^{j} I'(p_{i}) = k-1$, we have  $I'(s_{j}^{k}) = 0$.  A contradiction. 
\end{enumerate}
\mbox{}
\end{proof}

\section{A sequence of length 1160 and discrepancy 2}\label{sec:sequence}
We give a graphical representation of one of the sequences of length $1160$ 
obtained from the satisfying assignment computed with the \Treengeling solver.
Here $+$ stands for $+1$ and $-$ for $-1$, respectively.

\smallskip 

\noindent{}{\tt 
- + + - + - - + + - + + - + - - + - - + + - + - - + - - +\\
+ - + - - + + - + + - + - + + - - + + - + - - - + - + + - \\
+ - - + - - + + + + - - + - - + + - + - - + + - + + - - - \\
- + + - + + - + - + + - - + + - + - + - - - + + - + - - +\\
+ - + + - + - - + + - + - - + - - - + - + + - + - - + + - \\
+ + - + - - + - - + + - + + - + - - + + - + - - + + + - +\\
- + - - - - + + + - + - - + - - + + + - - - + + - + + - + \\
- - + - - + + + - - + - + - + - - + - + + + - + + - + - - \\
+ - - + + - + - - + + - + + - + - - + - - + + - - + + + -\\
- - + + + - + - - - + + - + - - + + - - + - + - - + - + + \\
+ - + - - + + - + + - + - - + + - + - - + - - + + - + - -\\
+ + - - + - + + - + - + - - + - + - + + - + - - + + - + -\\
- + - - + + - + - + - + + - + - + - + + - - - + - + - - + \\
+ + + - - + - - - + + - + - + + - + - - + + - + - - + - -\\
+ + - + - - + + + + - - + - - - + - + + + + - - + - - + +\\
- + + - + - - + + - + - - + - - + + - + - - + + - + + - +\\
- - + + - + - - + - - + + - + + - + - - - - + + + - + - - \\
+ + - - + + + - - - + - + + - + - - + - + + - - - + - + +\\
- + + - + - - + - - + + - - + + + + - + - - + - - + - - + \\
+ + + - - + - - + + + - - - + + - + + - + - - + + - - + - \\
+ - - + - - + + - + + - + - - + - - + - + + + - + + - + - \\
- + - - + + - - + - + + - + + - + - - + - - + - - + + - + \\
+ - + - - + + + - - - + + - + - - + + - + + - - - + + + - \\
- - + + - + + - - - - + + + - - + - + + - + - - + - - + +\\
- + - - + + - + + - + - + + - - + + - - + + - - - - + + + \\
- + + - - + + - - - - + + - + + + - - + + - - - + + + - -\\
- - + - + - + + - + + - + + - + - + - - - - + + + - - + +\\
- + - - + + - + + - + - - + - - + - - + + - + - - + + - + \\
+ - + - - + + - - + - + - - + - + - + - + + + + - - - + -\\
+ - + + - - + - - + - + - + - + + - + - + + + - - + - + - \\
- + - - + - + + + - - + - + + + - - - + + - + - - + - - +\\
+ - + + - - + + - - - + + - + - + + - - + + - + - - - + - \\
+ + - + - - + - + + - - + + - + - - + + - + - - + - + + + \\
- + - - + + - - + - + - + + + - - + - + - - + + - + + - +\\
- - + - - + - + + - - - + - + + - + - + + - - + + - + - -\\
+ + + - + - - - - + + - - + - + + - + - + + - - + + - + - \\
- + + - + - + + - - + + - + - - - + - + + - + - - + + + - \\
- - - + - + - + + - - + + - + - - + + - + + - + + - + - - \\
+ - - + - - + + + + - - - + + - - - + - + - + + - + - + + \\
+ - - + - + + - - + - + - - + - + - + + - - - + + + - + + \\
}

\end{document}